\title{Optimally Sorting Evolving Data}
\author{Juan Jose~Besa}{Dept. of Computer Science, Univ. of California, Irvine, Irvine, CA 92697 USA}{jjbesavi@uci.edu}{https://orcid.org/0000-0002-5676-7011}{}
\author{William E.~Devanny}{Dept. of Computer Science, Univ. of California, Irvine, Irvine, CA 92697 USA}{wdevanny@uci.edu}{}{Supported by an NSF Graduate Research Fellowship under grant DGE-1321846.}
\author{David Eppstein}{Dept. of Computer Science, Univ. of California, Irvine, Irvine, CA 92697 USA}{eppstein@uci.edu}{}{}
\author{Michael T.~Goodrich}{Dept. of Computer Science, Univ. of California, Irvine, Irvine, CA 92697 USA}{goodrich@uci.edu}{}{}
\author{Timothy Johnson}{Dept. of Computer Science, Univ. of California, Irvine, Irvine, CA 92697 USA}{tujohnso@uci.edu}{}{}
\authorrunning{J. Besa and W. Devanny and D. Eppstein and M. Goodrich and T. Johnson}
\subjclass{\ccsdesc[500]{Theory of computation~Sorting and searching}}
\keywords{Sorting, Evolving data, Insertion sort} % mandatory: Please provide 1-5 keywords
\begin{document}

\maketitle

\begin{abstract}
We give optimal sorting algorithms in the 
\emph{evolving data} framework, where an algorithm's input data
is changing while the algorithm is executing.
In this framework, instead of producing a final output, an algorithm 
attempts to maintain an output close to the correct output 
for the current state of the data, repeatedly updating its best estimate
of a correct output over time.
We show that a simple repeated insertion-sort algorithm
can maintain an $O(n)$ Kendall tau distance, with high probability,
between a maintained list 
and an an underlying total order of $n$ items 
in an evolving data model where
each comparison is followed by a swap between a random consecutive
pair of items in the underlying total order.
This result is asymptotically optimal, since there is an $\Omega(n)$ lower bound
for Kendall tau distance for this problem. 
Our result closes the gap between this lower bound
and the previous best algorithm for this problem, which maintains a Kendall tau
distance of $O(n\log \log n)$ with high probability. It also confirms previous experimental
results that suggested that insertion sort tends to perform better than quicksort in practice.
\end{abstract}

\section{Introduction}
In the classic version of the sorting problem,
we are given a set, $S$, of $n$ comparable items coming from a fixed total
order and asked to compute a permutation that places the items from
$S$ into non-decreasing order, and it is well-known
that this can be done using $O(n\log n)$ comparisons, which is
asymptotically optimal
(e.g., see~\cite{Cormen:2001,Goodrich:2014:ADA,knuth1998art}).
However, there are a number of interesting 
applications where this classic version of
the sorting problem doesn't apply.

For instance, consider the problem of maintaining a
ranking of a set of sports teams based on the results
of head-to-head matches.
A typical approach to this sorting problem is to assume there is 
a fixed underlying total order
for the teams, but that the outcomes of head-to-head matches 
(i.e., comparisons) are ``noisy'' in some way.
In this formulation, the ranking problem becomes a one-shot optimization
problem of finding the most-likely fixed total order given the outcomes
of the matches 
(e.g., see~\cite{Braverman:2008,Feige1994,Groz:2015,Hochbaum2006,Makarychev:2013}).
In this paper, we study an alternative, complementary motivating scenario,
however,
where instead of there being a fixed total order and noisy comparisons we
have a scenario where comparisons are accurate but the underlying
total order is evolving.
This scenario, for instance, 
captures the real-world phenomenon where sports teams
make mid-season changes to their player rosters and/or coaching staffs that result
in improved or degraded competitiveness relative to other teams.
That is, we are interested in the sorting problem for
\emph{evolving data}.

\subsection{Related Prior Work for Evolving Data}
Anagnostopoulos \textit{et al.}~\cite{sort11} introduce
the \emph{evolving data} framework, 
where an input data set is changing while an algorithm is processing it.
In this framework, instead of an algorithm taking a single input and producing 
a single output, an algorithm attempts to maintain an output close to the correct
output for the current state of the data, repeatedly updating its best estimate of
the correct output over time.
For instance,
Anagnostopoulos \textit{et al.}~\cite{sort11} mention the motivation of maintaining
an Internet ranking website that displays an
ordering of entities, such as political candidates, movies, or vacation spots,
based on evolving preferences.

Researchers have subsequently studied other 
interesting problems in the evolving data
framework, including 
the work of Kanade \textit{et al.}~\cite{kanade_et_al:LIPIcs:2016}
on stable matching with evolving preferences,
the work of Huang \textit{et al.}~\cite{Huang2017}
on selecting top-$k$ elements with evolving rankings,
the work of Zhang and Li~\cite{zhang2016shortest} on
shortest paths in evolving graphs,
the work of Anagnostopoulos \textit{et al.}~\cite{Anagnostopoulos:2012:AEG}
on st-connectivity and minimum spanning trees in evolving
graphs,
and the work of Bahmani \textit{et al.}~\cite{Bahmani:2012}
on PageRank in evolving graphs.
In each case, the goal is to maintain an output close to the correct one 
even as the underlying data is changing at a rate commensurate to the speed of the
algorithm.
By way of analogy, classical algorithms are to
evolving-data algorithms as throwing is to juggling.

\subsection{Problem Formulation for Sorting Evolving Data}
With respect to the
sorting problem for evolving data,
following the formulation of Anagnostopoulos \textit{et al.}~\cite{sort11},
we assume that we have a set, $S$, of $n$ distinct
items that are properly ordered
according to a total order relation, ``$<$''.
In any given time step,
we are allowed to compare any pair of items, $x$ and $y$, in $S$
according to the ``$<$'' relation and we learn the correct outcome of
this comparison.
After we perform such a comparison, $\alpha$ pairs 
of items that are 
currently consecutive according to the ``$<$'' relation are chosen
uniformly at random and their relative order is swapped.
As in previous work~\cite{sort11},
we focus on the case where $\alpha=1$, but
one can also consider versions of the problem where the ratio between
comparisons and random consecutive swaps is something other than one-to-one.
Still, this simplified version with a one-to-one ratio already raises some
interesting questions.

Since it is impossible in this scenario to maintain a list
that always reflects a strict ordering according to the ``$<$''
relation, our goal is to maintain a list with small 
\emph{Kendall tau} distance, which counts the number
of inversions, relative to the correct order.\footnote{Recall that an
  \emph{inversion}
  is a pair of items $u$ and $v$ such that $u$ comes before
  $v$ in a list but $u>v$.  An \emph{inversion} in 
  a permutation $\pi$ is a pair of elements $x\neq y$ 
  with $x < y$ and $\pi(x) > \pi(y)$.}
Anagnostopoulos \textit{et al.}~\cite{sort11} 
show that, for $\alpha=1$, the Kendall tau distance
between the maintained list and the underlying total order is 
$\Omega(n)$ in both expectation and with high probability.
They also show how to maintain this distance to be $O(n\log\log n)$,
with high probability, by performing a multiplexed 
batch of quicksort algorithms 
on small overlapping intervals of the list.
Recently, Besa Vial \textit{et al.} \cite{sort18} empirically show that 
repeated versions of quadratic-time algorithms
such as bubble sort and insertion sort seem to maintain an 
asymptotically optimal distance of $O(n)$. In fact, this linear upper bound seems to hold even if we allow $\alpha$, the number of random swaps at each step, to be a much larger constant.

\subsection{Our Contributions}
The main contribution of the present paper is to prove
that repeated insertion sort maintains
an asymptotically optimal Kendall tau distance, with high probability, for sorting
evolving data.
This algorithm repeatedly makes in-place insertion-sort passes 
(e.g., see~\cite{Cormen:2001,Goodrich:2014:ADA}) over the list,
$l_t$, maintained by our algorithm at each step $t$.
Each such pass moves the item at position $j$ to an earlier position in the 
list so long as it is bigger
than its predecessor in the list.
With each comparison done by this repeated insertion-sort 
algorithm, we assume that 
a consecutive pair of elements in the underlying ordered list, $l_t'$,
are chosen uniformly at random and swapped.
In spite of the uncertainty involved in sorting evolving data in this way,
we prove the following theorem, which is the main result of this paper.

\begin{theorem}\label{thm:ins-sort}
Running repeated insertion-sorts algorithm,
for every step $t =\Omega(n^2)$, the Kendall tau distance between 
the maintained list, $l_t$, and the underlying ordered list, $l_t'$,
is $O(n)$ with exponentially high probability.
\end{theorem}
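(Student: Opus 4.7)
I would track $D_t$, the Kendall tau distance between $l_t$ and $l_t'$, letting $t$ advance by one with each comparison performed. Two elementary observations drive the analysis. First, each insertion-sort compare-and-swap changes $D_t$ by $-1$ when it performs a swap and by $0$ otherwise, since swapping adjacent positions in $l_t$ flips the relative order of exactly one pair of items. Second, each random adjacent swap in $l_t'$ changes $D_t$ by $+1$ if the chosen $l_t'$-adjacent pair was previously non-inverted against $l_t$, and by $-1$ otherwise. So per comparison, $|\Delta D_t| \le 2$, which sets the problem up for martingale (Azuma--Hoeffding) concentration.

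\textbf{Pass-level contraction.} The heart of the argument is to show that a single pass of insertion sort is contractive on $D_t$ whenever $D_t$ is large compared to $n$. A pass executes $n - 1 + S$ comparisons, where $S$ is its number of swaps, contributing $-S$ to $D_t$; the environment contributes at most $n - 1 + S$ in the worst case, so contraction requires a finer argument than crude bounding. I would show (i) that with exponentially high probability $S \ge D_t/2 - O(n)$, so insertion sort actually finds most of the inversions present at the start of the pass, and (ii) that the expected environmental contribution is a strictly smaller fraction of $n - 1 + S$, because when $D_t$ is large the algorithm's sweep creates enough $l_t'$-adjacent inversions that many random swaps in $l_t'$ actually decrease $D_t$. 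Combined, these give $E[D_{t+P} \mid D_t] \le D_t/2 + O(n)$ for a pass of length $P = \Theta(n + D_t)$.

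\textbf{Iteration, concentration, and the main obstacle.} Iterating the pass-level contraction, after $O(\log(D_0/n))$ passes $D_t$ enters the $O(n)$ regime, and $t = \Omega(n^2)$ supplies ample time. Once there, I would show $D_t$ cannot escape a band $[0, Cn]$ without a rare large-deviation event, applying Azuma--Hoeffding to the bounded-difference Doob martingale $M_t = D_t - \sum_{s < t} E[\Delta D_s \mid \mathcal{F}_s]$ together with a union bound over the $\mathrm{poly}(n)$ passes in any polynomial time window. The main obstacle will be claim (i): insertion sort's sweep is deterministic, but the environment's random swaps interleave with it, so inversions injected behind the advancing pointer persist until the next pass and could suppress the current pass's swap count. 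My plan for this is a coupling argument comparing the process to a ``frozen-environment'' pass in which $l_t'$ is held fixed for the duration of one pass (where insertion sort trivially achieves $S = D_t$ against the frozen target), and bounding the cost of un-freezing by summing the bounded increments from the $O(n + D_t)$ environmental swaps that occur during the pass.
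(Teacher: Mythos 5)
Your high-level shape---pass-level contraction plus concentration over many passes---is the right instinct, and your claim (ii) is essentially the paper's Lemma~\ref{lem:swap-prob} (a constant fraction of the environmental swaps are ``good'' and decrease the Kendall tau distance). You also correctly identify claim (i) as the main obstacle: you need a lower bound on how many inversions a single insertion-sort pass actually fixes, despite the environment interleaving swaps with the sweep.

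The gap is in your proposed route to claim (i). You plan to couple to a ``frozen-environment'' pass (which trivially fixes $D_{t}$ inversions) and then ``bound the cost of un-freezing by summing the bounded increments from the $O(n+D_t)$ environmental swaps.'' But those increments are \emph{not} bounded. A single random adjacent swap in $l'$ can demote one element past many others in the underlying order, which in one stroke makes $\Theta(n)$ future inversions unreachable by the current pass---it terminates each relevant inner \textbf{while}-loop early. This is exactly the phenomenon the paper flags (Figure~\ref{fig:bad-swaps}): one random swap can decrease the remaining-work estimate $S$ by a non-constant amount. Because of this, the cost of un-freezing is not $O(\text{number of swaps})$, and your coupling does not close. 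Moreover, the conclusion you want, $S\ge D_t/2-O(n)$ (equivalently, pass length $\Theta(n+D_t)$), is stronger than what the paper manages to prove: Lemma~\ref{lem:many-invs} only gives pass length $\gtrsim cn$ when $I_{t_s}\gtrsim c^2 n$, i.e., roughly $\Omega(\sqrt{I_{t_s}n})$, because the ``blocked/stuck'' inversions accumulate quadratically in the number of environmental swaps. The paper handles this by bounding the \emph{aggregate} damage, not per-swap damage, via the Cartesian-tree/minima-path decomposition, the $\textit{Inc}/\textit{Dec}$ counter invariant, and a balls-and-bins concentration for the sum of squares (Section~\ref{sec:lem-proof}, Appendix~\ref{app:swap-counters}, Theorem~\ref{thm:balls-bins-squared}). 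It then compensates for the weaker pass-length bound by arguing about per-step drift over a full $\Theta(n^2)$-step epoch rather than per-pass contraction by a constant factor. So your decomposition is close in spirit, but claim (i) needs both a different (weaker) statement and a fundamentally different proof technique than bounded-increment coupling.
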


That is, after an initialization period of $\Theta(n^2)$ steps, the repeated
insertion-sort algorithm converges to a steady state having an asymptotically 
optimal Kendall tau distance between the maintained list and the underlying
total order, with exponentially high probability.
We also show how to reduce this initialization period to be $\Theta(n\log n)$
steps, with high probability, by first performing a quicksort 
algorithm and then following that with the repeated insertion-sort
algorithm.

Intuitively, our proof of \autoref{thm:ins-sort} relies on two ideas:
the adaptivity of insertion sort and that, as time progresses, a constant
fraction of the random swaps fix inversions. Ignoring the random
swaps for now, when there are $k$ inversions, a complete 
execution of insertion sort performs
roughly $k+n$ comparisons and fixes the $k$ inversions
(e.g., see~\cite{Cormen:2001,Goodrich:2014:ADA}).
If an $\epsilon$
fraction of the random swaps fix inversions, then during insertion
sort $\epsilon(k+n)$ inversions are fixed by the random swaps and
$(1-\epsilon)(k+n)$ are introduced. Naively the total change in the
number of inversions is then $(1-2\epsilon)(k+n) -k$ and 
when $k > \frac{1-2\epsilon}{2\epsilon} n$, the number of inversions decreases.
So the number of inversions will decrease until $k = O(n)$. 

This simplistic intuition ignores two competing forces involved in
the heavy interplay between the random
swaps and insertion sort's runtime, however, in the evolving data model,
which necessarily complicates our proof. 
First, random swaps can cause an insertion-sort pass to end too early,
thereby causing insertion
sort to fix fewer inversions than normal. Second, as 
insertion sort progresses, it 
decreases the chance for a random swap to fix an inversion.
Analyzing these two interactions comprises the majority of our proof
of Theorem~\ref{thm:ins-sort}.

In Section~\ref{sec:ins-sort}, we present a complete proof of
Theorem~\ref{thm:ins-sort}. The most difficult component of
Theorem~\ref{thm:ins-sort}'s proof is Lemma~\ref{lem:many-invs}, which lower bounds the runtime of insertion sort in the evolving data model. The proof of Lemma~\ref{lem:many-invs} is presented separately in Section~\ref{sec:lem-proof}.

\section{Preliminaries}

The sorting algorithm we analyze in this paper for the evolving data
model is 
the repeated insertion-sort algorithm 
whose pseudocode is shown in \autoref{alg:rep-ins-sort}.

\begin{algorithm}[hbt]
\caption{Repeated insertion sort pseudocode}\label{alg:rep-ins-sort}
\small
\begin{algorithmic}
\Function{repeated\_insertion\_sort}{$l$}
  \While{true}
    \For{$i \gets 1$ to $n-1$}
      \State $j \gets i$
      \While{ $j > 0$ and $l[j] < l[j-1]$}
        \State swap $l[j]$ and $l[j-1]$
        \State $j \gets j-1$
        \EndWhile
    \EndFor
  \EndWhile
\EndFunction
\end{algorithmic}
\end{algorithm}

Formally, at time $t$, we denote the sorting algorithms' list as $l_t$
and we denote the underlying total order as $l'_t$. Together these two lists
define a permutation, $\sigma_t$, of the indices, where $\sigma_t(x) = y$
if the element at index $x$ in $l_t$ is at position $y$ in $l'_t$.
We define the \emph{simulated final state at time $t$} to be the state of $l$ 
obtained by freezing the current underlying total order, $l'_t$,
(i.e., no more random swaps)
and simulating the rest of the current round of insertion sort
(we refer to each iteration of the \textbf{while-true} loop in 
Algorithm~\ref{alg:rep-ins-sort} as a \emph{round}).
We then define a \emph{frozen-state} permutation,
$\hat{\sigma}_t$, where $\hat{\sigma}_t(x) = y$ if the element at index $x$ in 
the simulated final state at time $t$ as at index $y$ in $l'_t$.

Let us denote the number of inversions at time $t$, in
$\sigma_t$, with $I_t$. Throughout the paper, we may choose to drop
time subscripts if our meaning is clear. The Kendall tau distance
between two permutations $\pi_1$ and $\pi_2$ is the number of pairs
of elements $x\neq y$ such that $\pi_1(x) < \pi_1(y)$ 
and $\pi_2(x) > \pi_2(y)$. 
That is, the Kendall tau distance between $l_t$ and $l'_t$ is
equal to $I_t$, the number of inversions in $\sigma_t$.
Figure~\ref{fig:evol-model} shows the state of $l$, $l'$, $I$, and
$\sigma$ for two steps of an insertion sort (but not in the same round).

\begin{figure}[hbt]
\centering
\includegraphics[scale=1.0]{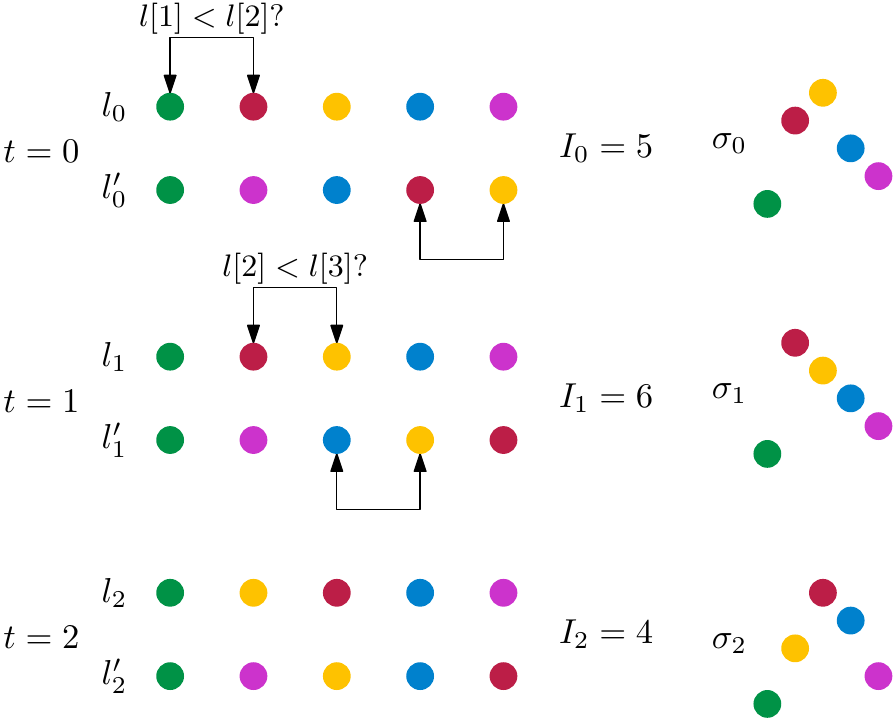}
\caption{Examples of $l$, $l'$, $I$, and $\sigma$ over two steps of an algorithm. In the first step the green and red elements are compared in $l$ and the red and yellow elements are swapped in $l'$. In the second step the red and yellow elements are compared and swapped in $l$ and the blue and yellow elements are swapped in $l'$.}\label{fig:evol-model}
\end{figure}

As the inner \textbf{while}-loop of Algorithm~\ref{alg:rep-ins-sort} executes,
we can view 
$l$ as being divided into three sets: the set containing just
the \emph{active} element, $l[j]$ (which we view as moving to
the left, starting from position $i$, 
as it is participating in comparisons and swaps), 
the \emph{semi-sorted} portion,
$l[0:i]$, not including $l[j]$, and the \emph{unsorted} portion, $l[i+1:n-1]$.
Note that if no random adjacent swaps were occurring in $l'$ (that
is, if we were executing insertion-sort in the classical algorithmic
model),
then the semi-sorted portion would be in sorted order.

We call the path from the root to the rightmost leaf of the Cartesian
tree the (right-to-left) minima path as the elements on this path are
the right-to-left minima in the list. The minima path is highlighted
in Figure~\ref{fig:cart-trees}. For a minimum, $l[k]$, denote with $M(k)$
the index of the element in the left subtree of $l[k]$ that maximizes
$\hat{\sigma}(k)$, i.e., the index of the largest element in the left subtree.

We use the phrase \emph{with high probability} to indicate when an event occurs with probability that tends towards $1$ as $n\rightarrow \infty$. When an event occurs with probability of the form $1-e^{-\mathop{poly}(n)}$, we say it occurs with \emph{exponentially high probability}.
During our analysis, we will make use of the following facts.

\begin{lemma}[Poisson approximation (Corollary 5.9 in \cite{alice})]
\label{thm:pois}
Let $X^{(m)}_1,\dots,X^{(m)}_n$ be the number of balls in each bin when $m$ balls are thrown uniformly at random into $n$ bins.
Let $Y^{(m)}_1,\dots,Y^{(m)}_n$ be independent Poisson random variables with $\lambda = m/n$.
Then for any event $\varepsilon(x_1,\dots,x_n)$:
\[ \Pr\left[\varepsilon\left(X^{(m)}_1,\dots,X^{(m)}_n\right)\right] \leq
e\sqrt{m} \Pr\left[\varepsilon\left(Y^{(m)}_1,\dots,Y^{(m)}_n\right)\right]. \]
\end{lemma}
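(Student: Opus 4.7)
The plan is to exploit the classical coupling between a vector of independent Poisson variables and the multinomial occupancy distribution. First I would verify that when $Y_1,\dots,Y_n$ are independent $\mathrm{Poisson}(m/n)$, their sum $S=\sum_i Y_i$ is $\mathrm{Poisson}(m)$, and that conditional on $S=m$ the joint distribution of $(Y_1,\dots,Y_n)$ is exactly the $\mathrm{Multinomial}(m;1/n,\dots,1/n)$ distribution governing $m$ balls tossed uniformly into $n$ bins. This identification is a short direct computation: the conditional probability of observing a tuple $(k_1,\dots,k_n)$ with $\sum_i k_i=m$ simplifies to $\tfrac{m!}{k_1!\cdots k_n!}\,n^{-m}$, which is precisely the law of $(X^{(m)}_1,\dots,X^{(m)}_n)$.

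Armed with this coupling, I would write, by the definition of conditional probability,
\[ \Pr\!\left[\varepsilon(X_1,\dots,X_n)\right] = \Pr\!\left[\varepsilon(Y_1,\dots,Y_n)\mid S=m\right] = \frac{\Pr\!\left[\varepsilon(Y_1,\dots,Y_n)\wedge S=m\right]}{\Pr[S=m]} \leq \frac{\Pr\!\left[\varepsilon(Y_1,\dots,Y_n)\right]}{\Pr[S=m]}, \]
reducing the task to a lower bound on $\Pr[S=m]$. Since $S\sim\mathrm{Poisson}(m)$, this probability equals $e^{-m}m^m/m!$, and Stirling's approximation in the form $m!\leq e\sqrt{m}\,(m/e)^m$ (valid for every $m\geq 1$, since $\sqrt{2\pi}\,e^{1/(12m)}\leq e$ away from the boundary and the case $m=1$ holds with equality) yields $\Pr[S=m]\geq 1/(e\sqrt{m})$. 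Substituting into the previous display gives the stated inequality.

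The main (and essentially only) step requiring attention is matching the explicit constant $e\sqrt{m}$ of the lemma with whichever packaging of Stirling one uses; the Poisson-to-multinomial reduction is a textbook coupling and introduces no combinatorial difficulty, and no new probabilistic tool is needed beyond the two-line conditional-probability inequality above.
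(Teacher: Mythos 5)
The paper does not prove this lemma itself but cites it as Corollary~5.9 of Mitzenmacher and Upfal, and your proof correctly reproduces that textbook's standard argument: condition independent Poissons on their sum to recover the multinomial occupancy law, bound $\Pr[\varepsilon(X)]=\Pr[\varepsilon(Y)\mid S=m]\le \Pr[\varepsilon(Y)]/\Pr[S=m]$, and lower-bound $\Pr[S=m]=e^{-m}m^m/m!\ge 1/(e\sqrt{m})$ via Stirling. The Stirling packaging $m!\le e\sqrt{m}(m/e)^m$ is valid for all $m\ge 1$, and you rightly note that the quick justification $\sqrt{2\pi}\,e^{1/(12m)}\le e$ only takes effect for $m\ge 2$, with $m=1$ checked separately by direct equality.
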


\begin{lemma}[Hoeffding's inequality (Theorem 2 in \cite{hoeffding})]
\label{thm:hoeffding}
If $X_1,\dots,X_n$ are independent random variables and $a_k \leq X_k \leq b_k$ for $k=1,\dots,n$, then for $t>0$:
\[\Pr\left[\sum_k X_k - E\left[ \sum_k X_k \right] \geq tn\right] \leq
e^{-2n^2t^2/\left(\sum_k \left(b_k - a_k\right)^2\right)}. \]
\end{lemma}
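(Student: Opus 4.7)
The plan is to follow the standard Chernoff--Cram\'er approach. Write $S = \sum_k (X_k - E[X_k])$, so the event of interest is $\{S \geq tn\}$. For any parameter $s > 0$, Markov's inequality applied to the nonnegative random variable $e^{sS}$ gives
\[
\Pr[S \geq tn] \leq e^{-stn}\, E[e^{sS}],
\]
and by independence of the $X_k$ the moment generating function factors as $E[e^{sS}] = \prod_k E[e^{s(X_k - E[X_k])}]$. The task therefore reduces to bounding the moment generating function of each centered, bounded random variable $Y_k := X_k - E[X_k]$, which lies in the interval $[a_k - E[X_k],\, b_k - E[X_k]]$ of width $b_k - a_k$ and has mean zero.

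The key ingredient I would prove next is Hoeffding's lemma: if $Y$ is mean-zero and supported in $[\alpha,\beta]$, then $E[e^{sY}] \leq \exp(s^2(\beta-\alpha)^2/8)$. I would derive this via convexity of the exponential: writing $Y = \lambda \beta + (1-\lambda)\alpha$ with $\lambda = (Y-\alpha)/(\beta-\alpha)$ and taking expectations gives $E[e^{sY}] \leq \tfrac{\beta}{\beta-\alpha} e^{s\alpha} - \tfrac{\alpha}{\beta-\alpha} e^{s\beta}$. Letting $\phi(u)$ denote the logarithm of the right-hand side written as a function of $u = s(\beta-\alpha)$, a direct calculation identifies $\phi''(u)$ as the variance of a two-point distribution supported in an interval of length $1$, which is bounded by $1/4$. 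Since $\phi(0) = \phi'(0) = 0$, Taylor's theorem then yields $\phi(u) \leq u^2/8$, which is the claimed bound.

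Applying Hoeffding's lemma to each $Y_k$ and multiplying the per-coordinate bounds gives
\[
\Pr[S \geq tn] \leq \exp\!\left(-stn + \frac{s^2}{8}\sum_k (b_k - a_k)^2\right).
\]
Minimizing the right-hand side over $s > 0$ is a one-variable quadratic optimization; differentiating and setting equal to zero yields $s = 4tn / \sum_k (b_k - a_k)^2$, and substituting back produces exactly $\exp(-2n^2t^2/\sum_k(b_k-a_k)^2)$, which is the inequality in the statement.

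The only genuinely nontrivial step is Hoeffding's lemma, and within it the variance bound $\phi''(u) \leq 1/4$; this is the standard fact that a random variable supported in an interval of length $L$ has variance at most $L^2/4$, attained by the two-point uniform distribution on the endpoints. The remaining steps---the exponential Markov inequality, the factorization from independence, and the scalar optimization over $s$---are routine and pose no real obstacle.
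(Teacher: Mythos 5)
Your proof is correct: the exponential Markov bound, the factorization by independence, Hoeffding's lemma via convexity and the variance bound $\phi''(u)\leq 1/4$, and the optimization $s = 4tn/\sum_k(b_k-a_k)^2$ do yield exactly the stated bound $e^{-2n^2t^2/\sum_k(b_k-a_k)^2}$. The paper does not prove this lemma at all---it is imported verbatim as Theorem~2 of \cite{hoeffding}---and your argument is essentially the classical proof given there, so there is nothing to reconcile.
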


\section{Sorting Evolving Data with Repeated Insertion Sort}\label{sec:ins-sort}
Let us begin with some simple bounds with respect to a single round of 
insertion sort.

\begin{lemma}\label{lem:ins-basics}
If a round of insertion sort starts at time $t_s$ and finishes at time $t_e$, then
\begin{enumerate}
\item $t_e - t_s = F + n-1$, where $F$ is the number of inversions fixed 
(at the time of a comparison in the inner \textbf{while}-loop) by 
this round of insertion sort.
\label{en-1}
\item
$t_e-t_s< n^2/2$ 
\label{en-1b}
\item for any $t_s \leq t \leq t_e$, $I_t - I_{t_s} < n$.
\label{en-2}
\end{enumerate}
\end{lemma}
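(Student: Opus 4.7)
All three bounds are elementary counting statements for a single round of insertion sort, so the plan is to do a careful iteration-by-iteration bookkeeping of the comparisons performed by the algorithm, the swaps performed in $l$, and the random swaps occurring in $l'$. Let $s_i$ denote the number of adjacent swaps performed by the inner \textbf{while}-loop during the $i$-th iteration of the outer \textbf{for}-loop, so that $F = \sum_{i=1}^{n-1} s_i$.

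For Part~\ref{en-1} and Part~\ref{en-1b} I would argue directly from the pseudocode. Under the standard convention of treating the boundary $j = 0$ as a terminating comparison against an implicit $-\infty$ sentinel, each outer iteration performs exactly $s_i$ ``true'' comparisons, each immediately followed by a swap, and then one terminating ``false'' comparison, for a total of $s_i + 1$ comparisons. Since each comparison advances the clock by one (and triggers one random swap in $l'$), summing over the $n-1$ outer iterations gives $t_e - t_s = \sum_i (s_i + 1) = F + (n-1)$, proving Part~\ref{en-1}. For Part~\ref{en-1b}, $j$ starts at $i$ and strictly decreases with every swap, so the $i$-th outer iteration performs at most $i$ comparisons, giving $t_e - t_s \leq \sum_{i=1}^{n-1} i = n(n-1)/2 < n^2/2$.

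For Part~\ref{en-2} I would decompose the change in $I$ over the interval $[t_s, t]$ into contributions from algorithm swaps and from random swaps in $l'$. Each algorithm swap is performed precisely when $l[j] < l[j-1]$, i.e., when $(j-1, j)$ is an inversion of $\sigma$, so it removes exactly one inversion from $\sigma$. Each random swap in $l'$ flips the order of two elements consecutive in $l'$ and so changes $I$ by exactly $\pm 1$ (consecutiveness in $l'$ is what guarantees that no third element sits between the swapped pair in $l'$, so only their mutual inversion status can change). Letting $A$ count the algorithm swaps, $C$ the non-swap comparisons in $[t_s, t]$ (with $C \leq n - 1$ from Part~\ref{en-1}'s counting restricted to the outer iterations reached so far), and $R_+$, $R_-$ the random swaps in $l'$ that respectively create and remove inversions, we have $R_+ + R_- = t - t_s = A + C$, and therefore
\[ I_t - I_{t_s} \;=\; R_+ - R_- - A \;\leq\; R_+ - A \;\leq\; (A + C) - A \;=\; C \;\leq\; n - 1 \;<\; n. \]

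The main subtle point is verifying the exact $\pm 1$ behaviour of random and algorithm swaps on $I$; once that is in hand, the bound reduces to the clean observation that algorithm swaps ``pay for'' inversion-creating random swaps one-for-one, so only the at-most-$(n-1)$ non-swap comparisons are free to drive any net increase of $I$ within a single round. No probabilistic argument is needed here---all three inequalities are worst-case.
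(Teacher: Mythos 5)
Your proof is correct and takes essentially the same route as the paper for all three parts: Part~\ref{en-1} via the ``$s_i$ true comparisons plus one terminating check'' count per outer iteration, Part~\ref{en-1b} via the trivial $O(n^2)$ bound, and Part~\ref{en-2} via the observation that algorithm swaps fix exactly one inversion each, random swaps change $I$ by exactly $\pm 1$, and at most $n-1$ steps are non-swap comparisons. Your Part~\ref{en-2} bookkeeping with $A$, $C$, $R_+$, $R_-$ is a slightly more explicit rendering of the paper's inequality $I_t - I_{t_s} \leq (t-t_s) - (t-t_s - (n-1))$, but it is the same argument.

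One small point worth noting (present in the paper as well, so not a gap in your proposal relative to it): your Part~\ref{en-1} counts the loop-terminating check, including the boundary short-circuit at $j = 0$, as a time step, which gives $s_i + 1 \leq i+1$ steps for outer iteration $i$; your Part~\ref{en-1b} then asserts at most $i$ steps per iteration. Under the Part~\ref{en-1} convention the honest upper bound is $\sum_{i=1}^{n-1}(i+1) = (n-1)(n+2)/2$, which slightly exceeds $n^2/2$. This is a harmless off-by-$\Theta(n)$ slack that does not affect the asymptotic use of the lemma in Theorem~\ref{thm:ins-sort}, but if you wanted to be fully consistent you would either state Part~\ref{en-1} as an inequality $t_e - t_s \leq F + n - 1$ (if the $j=0$ short-circuit is free) or relax Part~\ref{en-1b} to, say, $t_e - t_s \leq n^2$.
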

\begin{proof}
(\ref{en-1}):
For each iteration of the outer \textbf{for}-loop,
each comparison in the inner \textbf{while}-loop either fixes an inversion
(at the time of that comparison)
or fails to fix an inversion and completes the inner 
\textbf{while}-loop.
Note that this ``failed'' comparison may not have compared elements of
$l$, but may have short circuited due to $j\leq 0$. 
Nevertheless, every comparison that doesn't fail fixes an inversion (at the time
of that comparison); hence, each non-failing comparison is counted in $F$.
% If a comparison
% short circuited, then the previous comparison must have evaluated to
% true, because $j>0$ at the start of each outer \textbf{for}-loop
% iteration. So in every outer \textbf{for}-loop iteration, there is at least one comparison involving elements of $l$.

(\ref{en-1b}):
In any round, there are at most $n(n-1)/2$ 
comparisons, by the formulations of the outer \textbf{for}-loop and inner
\textbf{while}-loop.

(\ref{en-2}):
At time $t$, the round of insertion sort will have executed $t-t_s$ steps. Of those steps, at least $t - t_s - (n-1)$ comparisons resulted in a swap that removed an inversion and at most $n-1$ comparisons did not result in a change to $l$. The random swaps occurring during these comparisons introduced at most $t - t_s$ inversions. So $I_t - I_{t_s} \leq t - t_s - \bigl(t - t_s - (n-1)\bigr) = n-1$. 
\end{proof}

We next assert the following two lemmas, which are used in the next section and proved later.

\begin{restatable}{lemma}{swapproblemma}
\label{lem:swap-prob}
There exists a constant, $0<\epsilon < 1$, such that, for a round of insertion sort that takes time $t^*$, at least $\epsilon t^*$ of the random adjacent swaps in $l'$ 
decrease $I$ during the round, with exponentially high probability.
\end{restatable}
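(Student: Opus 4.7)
The plan is to first exploit the following exact bookkeeping identity for a single round. Insertion sort explicitly removes $F = t^* - (n-1)$ inversions (Lemma~\ref{lem:ins-basics}(\ref{en-1})), while the $t^*$ random swaps contribute a net change of $(t^* - G) - G$ to $I$; summing gives
\[ I_{t_e} - I_{t_s} \;=\; -F + (t^* - G) - G \;=\; (n-1) - 2G. \]
Rearranging yields $G = ((n-1) - (I_{t_e} - I_{t_s}))/2$, so the conclusion $G \geq \epsilon t^*$ is \emph{deterministically} equivalent to the upper bound $I_{t_e} \leq (1-2\epsilon) t^*$. Since Lemma~\ref{lem:ins-basics}(\ref{en-2}) already gives $I_{t_e} \leq t^*$, the task reduces to upgrading this to a strict constant-factor improvement with exponentially high probability.

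To prove such an upgraded bound on $I_{t_e}$, I would charge each inversion surviving at $t_e$ either to $I_{t_s}$ or to a specific random swap in $l'$ that created it and was not subsequently undone; insertion sort swaps only ever remove inversions, so they cannot be responsible. A random swap whose position is ``behind'' insertion sort's active pointer by the time it occurs tends to contribute a permanent inversion for the round, whereas a swap ``ahead'' of the pointer is typically cleaned up by the remainder of the sweep. I would combine this with the Poisson approximation (Lemma~\ref{thm:pois}), which replaces the dependent counts of random swaps at each position of $l'$ by nearly independent Poisson$(t^*/(n-1))$ variables, and then apply Hoeffding's inequality (Lemma~\ref{thm:hoeffding}) together with a union bound over the $n-1$ positions to concentrate the number of ``surviving'' swap-induced inversions. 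The result is $I_{t_e} \leq (1-2\epsilon) t^*$ with exponentially high probability, from which the lemma follows by the identity.

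The main obstacle will be untangling the heavy two-way interaction between insertion sort and the random swaps: a random swap immediately in front of insertion sort's active element can terminate the inner while-loop prematurely, preventing an inversion that would otherwise be fixed, while the position of the active element biases the distribution of descents of $\sigma^{-1}$. I expect the cleanest proof to split into two regimes. When $I_{t_s} \gg n$, so $t^* \sim I_{t_s}$, insertion sort's adaptive runtime guarantees that a large constant fraction of the initial inversions are fixed during the round, directly giving the bound on $I_{t_e}$. When $I_{t_s} = O(n)$, so $t^* = O(n)$ and the round is short, a direct random-walk analysis on $l'$, essentially decoupled from insertion sort via Poissonization, shows that $I_{t_e}$ cannot exceed $(1-2\epsilon)(n-1)$ over $O(n)$ steps except with exponentially small probability.
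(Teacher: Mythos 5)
Your bookkeeping identity $I_{t_e} - I_{t_s} = (n-1) - 2G$ (where $G$ is the number of good random swaps) is correct, but the rearrangement that follows is not: it gives $G = \tfrac{1}{2}\bigl((n-1) + I_{t_s} - I_{t_e}\bigr)$, so $G \geq \epsilon t^*$ is equivalent to $I_{t_e} \leq I_{t_s} + (n-1) - 2\epsilon t^*$, which is \emph{not} the same as $I_{t_e} \leq (1-2\epsilon) t^*$. The two coincide only when $I_{t_s} = t^* - (n-1) = F$, and nothing guarantees $F = I_{t_s}$: random swaps during the round can spawn fresh inversions that insertion sort later repairs (so $F$ can exceed $I_{t_s}$), and inversions present at $t_s$ can become blocked or stuck and never be repaired by insertion sort at all (so $F$ can fall short of $I_{t_s}$). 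The related claim that Lemma~\ref{lem:ins-basics}(\ref{en-2}) ``already gives $I_{t_e} \leq t^*$'' is also off; that item gives $I_{t_e} - I_{t_s} < n$, which by your own identity is just $G \geq 0$. So the ``upgrade a known bound by a constant factor'' reduction you set up is not available, and what you would actually have to prove is an $I_{t_s}$-dependent bound on $I_{t_e}$ that your later sketch (charging survivors, Poissonizing swap counts, splitting on the size of $I_{t_s}$) does not clearly reach; in the large-$I_{t_s}$ regime in particular, knowing that most initial inversions get fixed does not tell you which were fixed by random swaps (which is what $G$ counts) rather than by insertion sort.

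The paper's proof takes a genuinely different, local route that sidesteps this bookkeeping entirely. It partitions the round into $\Theta(t^*/n)$ epochs of length $\Theta(n)$, notes that during one epoch insertion sort can touch only $O(n)$ positions of $l'$, so there remain $\Omega(n)$ disjoint adjacent $4$-tuples of $l'$-positions whose elements are never compared during that epoch. For such an ``untouchable'' tuple, conditioned on the middle pair receiving exactly two random swaps and the two outer pairs receiving none --- an event of constant probability after Poissonizing the per-position swap counts --- the middle pair is swapped and swapped back, and one of those two swaps necessarily decreases $I$, regardless of what insertion sort does elsewhere. A Chernoff bound then yields $\Omega(n)$ guaranteed-good swaps per epoch with exponentially high probability, summing to $\Omega(t^*)$ over the round. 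This argument never needs to control $I_{t_e}$, never needs to attribute each fix to a mechanism, and cleanly isolates the random swaps from the sort's trajectory --- which is exactly the entanglement your proposal flags as the main obstacle but does not resolve.
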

\begin{proof}
See Appendix~\ref{app:swap-prob}.
\end{proof}

\begin{lemma}\label{lem:many-invs}
If a round of insertion sort starts at time $t_s$ with $I_{t_s} \geq
(12c^2 +2c)n$ and finishes at time $t_e$, then,
with exponentially high probability, $t_e-t_s \geq cn$,
i.e., the insertion sort round takes at least $cn$ steps.
\end{lemma}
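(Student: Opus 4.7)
The plan is to prove the contrapositive: assume $\tau := t_e - t_s < cn$ and derive $I_{t_s} < (12c^2 + 2c)n$ with exponentially high probability. By Lemma~\ref{lem:ins-basics}, part (\ref{en-1}), the number of insertion-sort swaps during the round is $F = \tau - (n-1) < (c-1)n + 1$. Each random swap in $l'$ flips the inversion status of exactly one pair, so if $R_{\text{fix}}$ and $R_{\text{new}}$ count random swaps that respectively fix or create inversions, we have $R_{\text{fix}}+R_{\text{new}}=\tau$ together with the identity
\[
I_{t_s} \;=\; I_{t_e} + F + R_{\text{fix}} - R_{\text{new}} \;\leq\; I_{t_e} + F + \tau \;<\; I_{t_e} + (2c-1)n + 1.
\]
So the contrapositive reduces to an upper bound of the form $I_{t_e} \leq 12c^2 n + O(n)$.

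To bound $I_{t_e}$, I would work with the simulated-final-state permutation $\hat{\sigma}_t$. First observe that $\hat{\sigma}_{t_s}$ is the identity permutation, because simulating a full round of insertion sort on $l_{t_s}$ with $l'$ frozen at $l'_{t_s}$ sorts $l$ completely; at the end of the round $\hat{\sigma}_{t_e}=\sigma_{t_e}$, so $I_{t_e}$ equals the inversion count of $\hat{\sigma}_{t_e}$. The approach is then to track how the inversion count of $\hat{\sigma}_t$ changes across the $\tau$ steps: insertion-sort comparisons do not alter $\hat{\sigma}$, since they only advance the simulation, while each random swap in $l'$ perturbs $\hat{\sigma}$ by some amount $\Delta_t$. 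Using the Cartesian-tree and minima-path machinery introduced in the preliminaries (in particular, the indices $M(k)$ that locate the largest element in each minima-path node's left subtree), I would argue that $E[\Delta_t] = O(c)$ when averaged over the uniform choice of swapped adjacent pair in $l'$. Summing across the $\tau$ swaps and applying Hoeffding's inequality (Lemma~\ref{thm:hoeffding}) then yields $I_{t_e} \leq O(c\tau) = O(c^2 n)$ with exponentially high probability, which combined with the reduction above closes the argument.

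The main obstacle is the per-swap bound $E[\Delta_t] = O(c)$. Flipping the order relation of a single pair in $l'$ can change the outcome of many future comparisons in the simulated completion of the round, and in the worst case can cause the inner while-loop to terminate at a very different position, cascading into the placement of many elements in the simulated final state. To get the required $O(c)$ bound, I expect one must case-split by where in the Cartesian tree of the current $l_t$ the swapped pair sits: a swap deep within some left subtree of the minima path should only perturb the elements of that subtree, while a swap involving the minima path itself must be offset against the uniform $1/(n-1)$ probability of picking that particular adjacent pair. The Poisson approximation (Lemma~\ref{thm:pois}) is the natural tool for controlling the concentration of the $\tau$ random swaps near any individual minimum over the course of the round.
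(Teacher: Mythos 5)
Your reduction at the start is sound and closely parallels the paper's Lemma~\ref{lem:est-bound}: you correctly observe $\hat\sigma_{t_s}$ is the identity, $\hat\sigma_{t_e}=\sigma_{t_e}$, that the insertion-sort comparison is exactly the next step of the simulation and hence leaves $\hat\sigma$ unchanged, and that only the random swap perturbs $\hat\sigma$. So the problem genuinely does reduce to bounding the inversion count of $\hat\sigma_{t_e}$ (which is the paper's $B_{t_e}$) by $O(c^2 n)$. Up to this point you and the paper are on the same page.

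The gap is in the per-step expectation plus Hoeffding argument, and it is a real one. First, $\Delta_t$ is not bounded: the paper's Figure~\ref{fig:bad-swaps} is precisely an example where a single random adjacent swap in $l'$ introduces $\Omega(n)$ new blocked inversions, i.e.\ changes the inversion count of $\hat\sigma$ by $\Omega(n)$. Hoeffding's inequality (Lemma~\ref{thm:hoeffding}) requires a priori bounds $a_k\le X_k\le b_k$ on each increment and pays $\sum_k(b_k-a_k)^2$ in the exponent; with increments potentially of size $\Theta(n)$ you get nothing useful. Second, the asserted bound $E[\Delta_t]=O(c)$ is never established, and it is not clear it is even well-posed as stated: $\Delta_t$'s conditional distribution depends on the entire history, the $\Delta_t$ are neither independent nor obviously a martingale difference sequence, and the parameter $c$ only enters the lemma as the time horizon $\tau/n$, so you would really need a self-referential bound of the form ``$E[\Delta_t\mid\mathcal F_{t-1}]=O((t-t_s)/n)$'' whose proof is the entire content of the lemma. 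You flag this as ``the main obstacle,'' correctly, but the sketch that follows (Cartesian tree case analysis, $1/(n-1)$ offsetting) does not close it.

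The paper's route around exactly this obstacle is worth contrasting. Rather than tracking per-step changes to $B_t$, it bounds $B_t$ at any single time by the potential $\sum_k\bigl(\sigma(M(k))-\sigma(k)\bigr)^2$, and then shows (Invariant~1 and Lemma~\ref{lem:counter-sums}) that this potential is controlled by sums of squares of two families of counters $\textit{Inc}$ and $\textit{Dec}$ which each increase by exactly $1$ per random swap. The Cartesian-tree case analysis in Appendix~\ref{app:swap-counters} is used only to maintain that invariant under counter exchanges. Because the counters have unit increments, they form a balls-in-bins process, and the Poisson approximation (Lemma~\ref{thm:pois}) plus Hoeffding is applied at the counter level after conditioning on the high-probability event that no bin is too large (Theorem~\ref{thm:balls-bins-squared}). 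That transfer from ``one swap can change $B_t$ by $\Omega(n)$'' to ``one swap changes each counter by $1$'' is the key structural move, and it is the ingredient missing from your proposal. Without some analogue of it, the concentration step does not go through.
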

\begin{proof}
See Section~\ref{sec:lem-proof}.
\end{proof}

\subsection{Proof of Theorem~\ref{thm:ins-sort}}
Armed with the above lemmas (albeit postponing the proofs of Lemma~\ref{lem:swap-prob} and Lemma~\ref{lem:many-invs}), let us prove our main theorem.

\medskip \noindent
\textbf{Theorem~\ref{thm:ins-sort}.}
\textit{There exists a constant, $0<\epsilon<1$, such that,
when running the repeated insertion-sort algorithm,
for every step $t > (1+1/\epsilon)n^2$, the Kendall tau distance between
the maintained list, $l_t$, and the underlying ordered list, $l_t'$,
is $O(n)$, with exponentially high probability.}

\begin{proof}
By Lemma~\ref{lem:swap-prob}, there exists a constant $0<\epsilon<1$ such that at
least an $\epsilon$ fraction of all of the random swaps during a
round of insertion sort fix inversions. 
Consider an epoch of the last $(1+1/\epsilon)n^2$ steps of 
the repeated insertion-sort algorithm, that is, from
time $t'=t-(1+1/\epsilon)n^2$ to $t$.
During this epoch, 
some number, $m\ge 1$, of complete rounds of insertion sort are performed from start to end (by Lemma~\ref{lem:ins-basics}). 
Denote with $t_k$ the time at which insertion-sort 
round $k$ ends (and round $k+1$ begins),
and let $t_m$ denote the end time of the final complete round,
during this epoch.
By construction, observe that $t'\le t_0$ and $t_m \leq t$. 
Furthermore, because the insertion-sort rounds running before $t_0$ and 
after $t_m$ take fewer than $n^2/2$ steps (by Lemma~\ref{lem:ins-basics}), 
$t_m - t_0\,\ge\, {n^2}/{\epsilon}$.

The remainder of the proof consists
of two parts. In the first part, we show that for some complete round of
insertion sort ending at time $t_k\le t$,
$I_{t_k}$ is $O(n)$,
with exponentially high probability.
In the second part, we show that once we achieve
$I_{t_k}$ being $O(n)$, for $t_k\le t$,
then $I_{t}$ is $O(n)$, with exponentially high probability.

For the first part,
suppose, for the sake of a contradiction,
$I_{t_k} > \bigl(12(\frac{1}{\epsilon})^2 + \frac{2}{\epsilon}\bigr)n$, for all $0\leq k \leq m$. 
Then, by a union bound over the polynomial number of rounds, Lemma~\ref{lem:many-invs} applies to every such round of insertion sort. 
So, with exponentially high probability, each round takes at least $n/\epsilon$ 
steps. Moreover, by Lemma~\ref{lem:swap-prob},
with exponential probability, an $\epsilon$ fraction of
the random swaps from $t_m$ to $t_0$
will decrease the number of inversions. That is, these random swaps
increase the number of inversions by at most
\[
(1-\epsilon)(t_m-t_0)-\epsilon(t_m-t_0)=(1-2\epsilon)(t_m-t_0),
\]
with exponentially high probability.
Furthermore, by Lemma~\ref{lem:ins-basics}, 
at least a $\frac{(1/\epsilon)-1}{1/\epsilon}=1-\epsilon$ fraction of 
the insertion-sort steps fix inversions
(at the time of a comparison).
Therefore, with exponentially high probability,
we have the following:
\begin{align*}
I_{t_m} & \leq I_{t_0} - (1-\epsilon)(t_m-t_0) + (1-2\epsilon) (t_m-t_0) \\
 & = I_{t_0} - \epsilon(t_m-t_0) \\
 & \leq I_{t_0} - n^2.
\end{align*}
But, since $I_{t_0} < n^2$, the above bound implies that 
$I_{t_m} < 0$, which is a contradiction.
Therefore, with exponentially high probability,
there is a $k\le m$ 
such that $I_{t_k} \leq (12(\frac{1}{\epsilon})^2 + \frac{2}{\epsilon})n$. 

% The probability that $I_{t_m} > (12(\frac{1}{\epsilon})^2 + \frac{2}{\epsilon} +1)n$ is less then or equal to the probability that any $I_{t_\ell} > (12(\frac{1}{\epsilon})^2 + \frac{2}{\epsilon} +1)n$ for $k < \ell \leq m$. 
For the second part,
we show that the probability for a round $\ell>k$ to 
have $I_{t_{\ell}} > (12(\frac{1}{\epsilon})^2 + \frac{2}{\epsilon} + 1)n$ 
is exponentially small, by considering two cases (and their implied union-bound
argument):
\begin{itemize}
\item If $I_{t_{\ell-1}} \leq (12(\frac{1}{\epsilon})^2 + \frac{2}{\epsilon})n$, 
then Lemma~\ref{lem:ins-basics} implies $I_{t_{\ell}} \leq (12(\frac{1}{\epsilon})^2 + \frac{2}{\epsilon} + 1)n$.
\item If $(12(\frac{1}{\epsilon})^2 + \frac{2}{\epsilon})n \leq 
I_{t_{\ell-1}} \leq (12(\frac{1}{\epsilon})^2 + \frac{2}{\epsilon} + 1)n$, 
then, similar to the argument given above,
during a round of insertion sort, $\ell$, at least a $1-\epsilon$ fraction of the steps fix an inversion, and an $\epsilon$ fraction of the steps do nothing. Also at least an $\epsilon$ fraction of the random swaps fix inversions, while a $1-\epsilon$ fraction add inversions. Finally, the total length of the round is $t_{\ell} - t_{\ell-1}$.
Thus, with exponentially high probability, the total change in inversions is at 
most $-\epsilon(t_{\ell} - t_{\ell-1})$ and $I_{t_{\ell}} < I_{t_{\ell-1}}$.
\end{itemize}
Therefore, by a union bound over the polynomial number of insertion-sort
rounds, the probability that any $I_{t_\ell} > (12(\frac{1}{\epsilon})^2 + \frac{2}{\epsilon} + 1)n$ for $k < \ell \leq m$ is exponentially small. 
By Lemma~\ref{lem:ins-basics}, $I_t \leq I_{t_m} + n$. So, with exponentially high probability, $I_{t_m} \leq (12(\frac{1}{\epsilon})^2 + \frac{2}{\epsilon} + 1)n = O(n)$ and $I_t = O(n)$, completing the proof.
\end{proof}

\subsection{Improved Convergence Rate}

In this subsection, we provide an algorithm that converges to $O(n)$ inversions more quickly. 
To achieve the steady state of $O(n)$ inversions, 
repeated insertion sort performs $\Theta(n^2)$ comparisons.
But this running time to reach a steady state is a worst-case based on 
the fact that the running time of insertion sort is $O(n+I)$, where
$I$ is the number of initial inversions in the list, and, in the
worst case, $I$ is $\Theta(n^2)$.
By simply running a round of quicksort on $l$ first, we can achieve 
a steady state of $O(n)$ inversions after just $\Theta(n\log n)$ comparisons.
See Algorithm~\ref{alg:quick-ins-sort}.
That is, we have the following.

\begin{algorithm}[hbt]
\caption{Quicksort followed by repeated insertion sort pseudocode}\label{alg:quick-ins-sort}
\begin{algorithmic}
\Function{quick\_then\_insertion\_sort}{$l$}
  \State quicksort($l$)
  \While{true}
    \For{$i \gets 1$ to $n-1$}
      \State $j \gets i$
      \While{ $j > 0$ and $l[j] < l[j-1]$}
        \State swap $l[j]$ and $l[j-1]$
        \State $j \gets j-1$
        \EndWhile
    \EndFor
  \EndWhile
\EndFunction
\end{algorithmic}
\end{algorithm}

\begin{theorem}\label{thm:faster-conv}
When running Algorithm~\ref{alg:quick-ins-sort}, for every $t = \Omega(n\log n)$, $I_t$ is $O(n)$ with high probability.
\end{theorem}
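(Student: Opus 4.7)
The plan is to split the execution into an initial quicksort phase ending at time $\tau$ and a subsequent repeated-insertion-sort phase, and to establish three claims: (a) $\tau = O(n\log n)$ with high probability; (b) $I_\tau = O(n\log n)$ with high probability; and (c) starting from such a bound on $I_\tau$, the insertion-sort phase drives $I$ down to $O(n)$ in $O(n\log n)$ further steps. Claim (a) is essentially the textbook analysis of randomized quicksort: the pivot choices and recursion-tree structure depend only on the algorithm's own random coins and not on the comparison answers, so the classical $O(n\log n)$-comparisons-with-high-probability bound carries over. In particular, only $O(n\log n)$ random swaps occur in $l'$ during the quicksort phase.

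For claim (b), I would bound the displacement $d(u) = |\mathrm{rank}_{l_\tau}(u) - \mathrm{rank}_{l'_\tau}(u)|$ of each element and then use the Diaconis--Graham-style bound $I_\tau \le \sum_u d(u)$. If every comparison made by quicksort were consistent with the final order $l'_\tau$, then $l_\tau = l'_\tau$ and every displacement would be zero. A quicksort comparison is \emph{stale} (inconsistent with $l'_\tau$) only when some random $l'$-swap between the comparison time and $\tau$ flipped the order of the compared pair; since each adjacent $l'$-swap flips the order of exactly one pair, the total number of stale comparisons is at most the number of $l'$-swaps, namely $O(n\log n)$. A charging argument along the root-to-leaf paths of the quicksort recursion then converts this into a bound of $O(n\log n)$ on $\sum_u d(u)$, and hence on $I_\tau$.

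For claim (c), I would reuse the arguments from the proof of \autoref{thm:ins-sort} essentially verbatim: \autoref{lem:swap-prob} and \autoref{lem:many-invs} together imply that whenever $I$ sits above the $O(n)$ threshold from \autoref{lem:many-invs}, each epoch of $T$ steps decreases $I$ by at least $\epsilon T$ with exponentially high probability. Starting from $I_\tau = O(n\log n)$, after $T = O(n\log n/\epsilon) = O(n\log n)$ additional steps $I$ falls below that threshold, and the second-part case analysis in the proof of \autoref{thm:ins-sort} then keeps $I = O(n)$ at all subsequent times.

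The main obstacle will be claim (b). The subtlety is that a single stale comparison high in the quicksort recursion tree can in principle misplace an element by many positions, so making the charging argument rigorous requires careful bookkeeping. I expect to track, for each element, the chain of pivot comparisons it participates in along its root-to-leaf path in the recursion, and to use Hoeffding-style concentration (as in \autoref{thm:hoeffding}) over the $O(n\log n)$ random $l'$-swaps to control the per-element displacement with high probability.
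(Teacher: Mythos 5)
Your three-claim decomposition matches the paper's: an initial quicksort phase leaving $O(n\log n)$ inversions after $\Theta(n\log n)$ comparisons, followed by insertion-sort rounds that drive $I$ down to $O(n)$ in a further $O(n\log n)$ steps, and then a steady-state argument. Your claim (c) is exactly what the paper does, and you are right that it suffices to observe an insertion-sort round takes $O(I+n)$ time and that the $\epsilon$-drift from \autoref{lem:swap-prob} and \autoref{lem:many-invs} applies verbatim once $I$ is polynomially bounded. Where you diverge is in (a)--(b): the paper simply cites Anagnostopoulos et al.\ \cite{sort11}, who already prove both that a single quicksort run in this model finishes in $\Theta(n\log n)$ comparisons and that it leaves $O(n\log n)$ inversions with high probability; you instead try to re-derive (b) from scratch.

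The re-derivation of (b) as written has a real gap, and you have correctly identified where it is. The step ``number of stale comparisons $\le$ number of $l'$-swaps $= O(n\log n)$'' is sound, but it does \emph{not} convert, via any simple charging along root-to-leaf paths, into $\sum_u d(u) = O(n\log n)$. A single stale comparison against a pivot near the root sends an element to the wrong side of a partition of size $\Theta(n)$, so one stale comparison can by itself contribute $\Theta(n)$ to $\sum_u d(u)$; with $\Theta(n)$ comparisons at each of the top few levels, the naive bound you get is $\omega(n\log n)$ even after restricting to $O(n\log n)$ stale comparisons in total. Closing this requires an additional structural fact --- roughly, that a comparison against a pivot at a shallow level can only be stale if the element and the pivot were near each other in $l'$ at the comparison time, and the number of swaps bounds how far they can subsequently drift --- and the Hoeffding step you gesture at does not by itself supply that structure. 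You should either develop that refinement carefully or, more in line with the paper, invoke the quicksort analysis of \cite{sort11} directly for (a)--(b) and devote the proof to (c).
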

\begin{proof}
By the results of Anagnostopoulos {\it et al.}~\cite{sort11},
the initial round of quicksort takes $\Theta(n\log n)$ comparisons 
and afterwards the number of inversions (that is, the Kendall tau
distance between the maintained list and the true total order) is
$O(n\log n)$, with high probability.
Using a nearly identical argument to the proof of Theorem~\ref{thm:ins-sort}, 
and the fact that an insertion-sort round takes $O(I+n)$ time to resolve
$I$ inversions,
the repeated insertion-sort algorithm will, with high probability, achieve $O(n)$ inversions in an additional $O(n\log n)$ steps.
From that point on, it will maintain a Kendall tau distance of
$O(n)$, with high probability.
\end{proof}

\section{Proof of Lemma~\ref{lem:many-invs}}\label{sec:lem-proof}

Recall~Lemma~\ref{lem:many-invs}, which establishes a lower bound for the running
time of an insertion-sort round, given a sufficiently large amount of inversions 
relative to the underlying total order.

\medskip\noindent
\textbf{Lemma~\ref{lem:many-invs}.}
\textit{If a round of insertion sort starts at time $t_s$ with $I_{t_s} \geq
(12c^2 +2c)n$ and finishes at time $t_e$, then,
with exponentially high probability, $t_e-t_s \geq cn$,
i.e., the insertion sort round takes at least $cn$ steps.}

\medskip
The main difficulty in proving Lemma~\ref{lem:many-invs} is
understanding how the adjacent random swaps in $l'$ affect the
runtime of the current round of insertion sort on $l$. Let $S_t$ be
the number of steps left to perform in the current round of insertion
sort if there were no more random adjacent swaps in $l'$. In essence,
$S$ can be thought of as an estimate of the remaining time in the
current insertion sort round. If a new round of insertion sort is
started at time $t_s$, then $S_{t_s-1} = 1$ and $I_{t_s}\leq  S_{t_s} \leq I_{t_s} + n-1$. 
Each step of an insertion sort round decreases $S$ by one and the following random swap may increase or decrease $S$ by some amount. Figure~\ref{fig:bad-swaps} illustrates an example where one random adjacent swap in $l'$ decreases $S$ by a non-constant amount (relative to $n$).

\begin{figure}[hbt]
\centering
\includegraphics{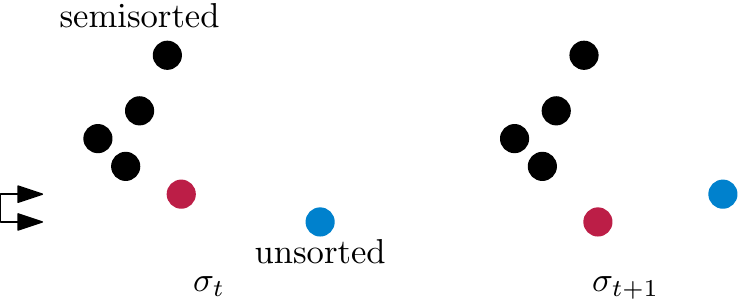}
\caption{An example where swapping the ordering of the red and blue elements in $l'$ creates multiple blocked inversions between the blue element and the black elements. Recall that our list is partitioned into the semisorted region, which contains elements that have already been compared in this round, and the unsorted region.}\label{fig:bad-swaps}
\end{figure}

A random adjacent swap in $l'$ involving two elements in the unsorted portion of $l$ will either increase or decrease $S$ by one depending on whether it introduces or removes an inversion. Random adjacent swaps involving elements in the semi-sorted portion have more complex effects on $S$.

An inversion currently in the list $\bigl(l[a],l[b]\bigr)$ will be fixed by insertion sort if $l[a]$ and $l[b]$ will be compared and the two are swapped. Because $a < b$, $l[b]$ must be the active element during this comparison.
An inversion $\bigl(l[a],l[b]\bigr)$ will not be fixed by insertion sort if $l[b]$ was already inserted into the semi-sorted portion or there is some element $l[c]$ in the semi-sorted portion with $a < c < b$ and $\sigma(c) < \sigma(b)$.
We call an inversion with $l[b]$ in the semi-sorted portion a \emph{stuck} inversion and an inversion with a smaller semi-sorted element between the pair a \emph{blocked} inversion.
We say an element $l[c]$ in the semi-sorted portion of $l$ \emph{blocks} an inversion $\bigl(l[a],l[b]\bigr)$ with $a \leq i$ and $l[b]$ either the active element or in the unsorted portion of $l$, if $l[c]$ is in the semi-sorted portion of $l$ with $a < c < b$ and $\sigma(c) < \sigma(b)$. Note that there may be multiple elements blocking a particular inversion. Figure~\ref{fig:blocked-and-stuck} shows examples of these two types of inversions.

\begin{figure}[hbt]
\centering
\includegraphics{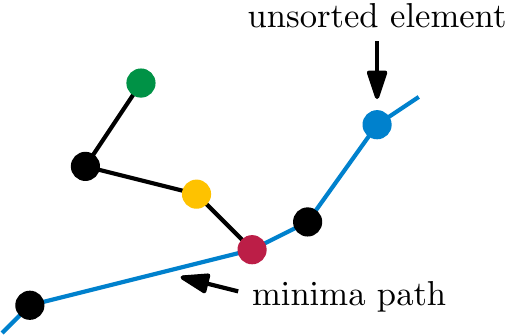}
\caption{In this Cartesian tree (see Appendix~\ref{app:swap-counters}), the green-blue pair is a blocked inversion 
and the green-yellow pair is a stuck inversion. Both pairs of inversions blame the red element.}\label{fig:blocked-and-stuck}
\end{figure}

We denote the number of ``bad'' inversions at time $t$ that will not be fixed with $B_t$. 
That is, $B_t$ is the sum of the blocked and stuck inversions.
At the end of an insertion-sort round every inversion present at the
start was either fixed by the insertion sort, fixed by a random
adjacent swap in $l'$, or is currently stuck. 
No elements can be blocked at the end of an insertion-sort round,
because the semi-sorted portion is the entire list. Stuck inversions are either created by random adjacent swaps in $l'$ or were blocked inversions and insertion sort finished inserting the right element of the pair. Blocked inversions are only introduced by the random adjacent swaps in $l'$. Thus $B_t$ is unaffected by the steps of insertion sort. 

Every inversion present at the start must be fixed by a step of
insertion sort, be fixed by a random swap, or it will end up ``bad''.
Therefore, for any given time, $t$, by using naive upper bounds based on
the facts that every insertion sort step can fix an inversion and 
every random adjacent swap can remove an inversion, we can immediately
derive the following:

\begin{lemma}\label{lem:est-bound}
For an insertion sort round that starts at time $t_s$ and ends at time $t_e$, if $t_s \leq t \leq t_e$, then $S_t \geq I_{t_s} - 2(t-t_s) - B_t$.
\end{lemma}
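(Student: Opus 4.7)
The plan is to derive the bound in two short, independent steps. First, relate $S_t$ to the current inversion count through the bad inversions, obtaining $S_t \geq I_t - B_t$. Second, relate $I_t$ back to $I_{t_s}$ by counting flipping events during the interval, obtaining $I_t \geq I_{t_s} - 2(t-t_s)$. Chaining the two inequalities immediately yields the claim.

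For the first step, I would argue from the definitions: by construction, any inversion present at time $t$ is either \emph{good}, meaning it will be resolved in the remainder of the current round once $l'$ is frozen, or it is one of the $B_t$ bad (stuck or blocked) inversions. So there are $I_t - B_t$ good inversions at time $t$. In the simulated frozen-$l'$ execution, each good inversion must be undone by a distinct insertion-sort swap, and a single step of insertion sort performs at most one swap. Hence the number $S_t$ of remaining simulated steps is at least the number of remaining simulated swaps, which is at least $I_t - B_t$.

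For the second step, I would exploit the fact that between times $t_s$ and $t$ there are exactly $t - t_s$ insertion-sort comparisons and $t - t_s$ adjacent random swaps in $l'$, and that each such event can decrease $I$ by at most one. An insertion-sort step at positions $j-1,j$ either does nothing (comparison succeeds) or swaps $l[j-1]$ and $l[j]$, which only toggles the inversion status of the single pair of elements involved, so $I$ changes by at most one. A random adjacent swap in $l'$ likewise flips the inversion status of exactly the swapped pair and no other pair, so it changes $I$ by $\pm 1$. Taken together, across the $2(t-t_s)$ events, $I$ can drop by at most $2(t-t_s)$, giving $I_t \geq I_{t_s} - 2(t-t_s)$. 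Combining with the first bound,
\[
S_t \;\geq\; I_t - B_t \;\geq\; I_{t_s} - 2(t - t_s) - B_t,
\]
as claimed. I do not expect any serious obstacle here; the only point worth verifying carefully is that swapping two adjacent entries in either $l$ or $l'$ affects only one pair's inversion status, so both an insertion-sort swap and a random adjacent swap contribute at most a unit decrease to $I$ per step.
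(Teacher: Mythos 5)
Your proof is correct and is essentially the same counting argument the paper uses, just factored into two cleaner steps: the paper directly accounts for the fate of each of the $I_{t_s}$ original inversions (fixed by an insertion-sort step, fixed by a random swap, bad at time $t$, or still good at time $t$ and hence contributing to $S_t$), while you obtain the identical bound by chaining $S_t \geq I_t - B_t$ with $I_t \geq I_{t_s} - 2(t-t_s)$, both justified by exactly the same ``each adjacent transposition changes $I$ by at most one'' observation. The factoring is arguably a bit tidier, and your check that an adjacent swap (in $l$ or $l'$) toggles the inversion status of only the swapped pair of \emph{elements} is exactly the right point to verify.
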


Since, when an insertion sort round finishes, $S_{t_e-1} = 1$, Lemma~\ref{lem:est-bound} implies $2(t_e-t_s-1) + B_{t_e} +1 \geq I_{t_s}$. If we understand how $B$ changes with each random adjacent swap in $l'$, then we can bound how long insertion sort needs to run for this inequality to be true.

We associate the blocked and stuck inversions with elements that we say are \emph{blamed} for the inversions. A blocked inversion $\bigl(l[a],l[b]\bigr)$ blames the element $l[c]$ with $a < c < b$ and minimum $\sigma(c)$. Note that $l[c]$ is on the minima path of the modified Cartesian tree (see Appendix~\ref{app:swap-counters}), and $l[a]$ is in the left subtree of $l[c]$. A stuck inversion either blames the element on the minima path whose subtree contains both $l[a]$ and $l[b]$ or if they appear in different subtrees, the inversion blames the element $l[c]$ with $a<c<b$ and minimum $\sigma(c)$. Again note that the blamed element is on the minima path and $l[a]$ is in the blamed element's left subtree. The bad inversions in Figure~\ref{fig:blocked-and-stuck} blame the red element.

Whether stuck or blocked, every inversion blames an element on the
minima path and the left element of the inverted pair appears in that
minimum's subtree. If $l[k]$ is on the minima path, $M(k)$ is the
index of the element in $l[k]$'s subtree with maximum $\sigma(M(k))$,
and an inversion $\bigl(l[a],l[b]\bigr)$ has $l[a]$ in $l[k]$'s subtree, then
both $l[a]$ and $l[b]$ are in the range $\sigma(k)$ to
$\sigma(M(k))$. So we can upper bound $B_t$ by
$\sum_{k=0}^{n-1} (\sigma(M(k)) - \sigma(k))^2$,
where we extend $M$ to non-minima indices with $M(k) = k$ if $k$ is not the index of a minima in $l$. 

\subsection{Bounding the Number of Blocked and Stuck Inversions with Counters}

For the purposes of bounding $B_t$, we conceptually associate two
counters, $\textit{Inc}(x)$ and $\textit{Dec}(x)$, with each element, $x$. The counters are initialized to zero at the start of an insertion sort round. When an element $x$ is increased by a random swap in $l'$, we increment $\textit{Inc}(x)$ and when $x$ is decreased by a random swap in $l'$, we increment $\textit{Dec}(x)$. After the random swap occurs, we may choose to exchange some of the counters between pairs of elements, but we will always maintain the following invariant:

\textbf{Invariant 1.} 
\textit{For an element, $l[k]$, on the minima path,} 
 \[\textit{Inc}\bigl(l[M(k)]\bigr) + \textit{Dec}\bigl(l[k]\bigr) \geq \sigma\bigl(M(k)\bigr) - \sigma(k).\]

This invariant allows us to prove the following Lemma:
\begin{lemma}\label{lem:counter-sums}
If $\sum_{k=0}^{n-1} \textit{Inc}\bigl(l[k]\bigr)^2 < \kappa$ and $\sum_{k=0}^{n-1} \textit{Dec}\bigl(l[k]\bigr)^2 < \kappa$, then $B_t \leq 4\kappa$.
\end{lemma}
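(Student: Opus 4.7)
The plan is to combine the earlier upper bound $B_t \leq \sum_{k=0}^{n-1} \bigl(\sigma(M(k)) - \sigma(k)\bigr)^2$ with Invariant~1 applied termwise, then collapse the resulting sum by exploiting disjointness of the minima's subtrees.

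First I would restrict the outer sum to indices $k$ that lie on the minima path, since for all other indices we defined $M(k) = k$ and those terms contribute zero. Then for each minimum $k$, Invariant~1 gives $\sigma(M(k)) - \sigma(k) \leq \textit{Inc}(l[M(k)]) + \textit{Dec}(l[k])$, so by the elementary inequality $(a+b)^2 \leq 2a^2 + 2b^2$ we obtain
\begin{equation*}
B_t \;\leq\; 2 \sum_{k \text{ on minima path}} \textit{Inc}\bigl(l[M(k)]\bigr)^2 \;+\; 2 \sum_{k \text{ on minima path}} \textit{Dec}\bigl(l[k]\bigr)^2.
\end{equation*}

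The key structural observation is that the map $k \mapsto M(k)$, restricted to minima, is injective: if $k_1 \neq k_2$ are both on the minima path then their left subtrees in the Cartesian tree are disjoint, so $M(k_1)$ and $M(k_2)$ (elements of those respective left subtrees) are distinct indices. Likewise the minima themselves are distinct indices. Consequently both of the above sums are bounded by the full sums over all $n$ indices, namely $\sum_{j=0}^{n-1} \textit{Inc}(l[j])^2 < \kappa$ and $\sum_{j=0}^{n-1} \textit{Dec}(l[j])^2 < \kappa$, respectively. Adding these yields $B_t < 2\kappa + 2\kappa = 4\kappa$, which is the claim.

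There is no real obstacle; the only subtlety is verifying the injectivity of $k \mapsto M(k)$ over the minima, which follows directly from the disjointness of subtrees of distinct nodes on the minima path. All the heavy lifting is already packaged in Invariant~1 and the bound $B_t \leq \sum_k (\sigma(M(k)) - \sigma(k))^2$ established above, so the lemma reduces to the two-line calculation sketched.
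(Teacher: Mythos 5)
Your proof is correct, and it is essentially the same argument as the paper's, differing only in packaging: where you apply $(a+b)^2 \le 2a^2 + 2b^2$ termwise and then bound the two sums separately, the paper instead invokes the $\ell_2$ triangle inequality, viewing $(\textit{Inc}\circ M + \textit{Dec})$ as a vector sum and bounding its squared length by $(\sqrt{\kappa}+\sqrt{\kappa})^2 = 4\kappa$; these two routes are arithmetically interchangeable. Your version is arguably the more careful one, because you make explicit the two facts that make the bound go through — that the non-minima terms vanish, and that $M$ restricted to the minima path is injective by disjointness of left subtrees. The paper instead says the entries of $\textit{Inc}$ are ``permuted by the function $M$,'' which is loose: extended to all indices, $M$ fixes every non-minimum and sends each minimum into a non-minimum, so it is not a bijection of $\{0,\dots,n-1\}$. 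The argument only works after restricting to the minima, which you do and the paper leaves implicit, so your write-up can be read as supplying the missing precision rather than taking a different road.
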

\begin{proof}
\begin{align}
B_t & \leq \sum_{k=0}^{n-1} \Bigl(\sigma(M(k)\bigr) - \sigma(k)\Bigr)^2 & \nonumber \\
 & \leq \sum_{k=0}^{n-1} \Bigl(\textit{Inc}\bigl(M(k)\bigr) + \textit{Dec}(k)\Bigr)^2 & \text{By Invariant 1} \label{eq:vector-sum}
\end{align}

By the assumptions of this lemma, interpreting $\textit{Inc}$ and $\textit{Dec}$ as two $n$-dimensional vectors, we know their lengths are both less than $\sqrt{\kappa}$. Equation~\ref{eq:vector-sum} is the squared length of the sum of the $\textit{Dec}$ and $\textit{Inc}$ vectors with the entries of $\textit{Inc}$ permuted by the function $M$. By the triangle inequality, the length of their sum is at most $2\sqrt{\kappa}$ and so the squared length of their sum is at most $4\kappa$.

Therefore, $B_t \leq 4\kappa$.
\end{proof}

In the appendix we prove the following lemma for these increment and decrement counters.
\begin{restatable}{lemma}{incdeccounterlemma}
There is a counter maintenance strategy that maintains Invariant 1 such that after each random adjacent swap in $l'$, the corresponding counters are incremented and then some counters are exchanged between pairs of elements.
\end{restatable}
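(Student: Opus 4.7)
The plan is to prove this by induction on the number of random adjacent swaps in $l'$ that have elapsed since the start of the current insertion-sort round. At each step the increment rule is natural: when a swap in $l'$ exchanges two adjacent elements with $x$ moving up in $\sigma$ and $y$ moving down, we increment $\textit{Inc}(x)$ and $\textit{Dec}(y)$. The delicate piece is the exchange step, which we invoke only when the swap causes a structural change in the (modified) Cartesian tree that alters the minima path or the identity of $l[M(k)]$ for some minimum $l[k]$.

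For the base case, at the very start of the round all counters are zero and the Cartesian tree is in an initial configuration where the invariant $\textit{Inc}(l[M(k)]) + \textit{Dec}(l[k]) \geq \sigma(M(k)) - \sigma(k)$ holds trivially; in particular $M(k) = k$ whenever the left subtree of $l[k]$ is empty, as it is before any semi-sorted element has been produced. For the inductive step I would partition the analysis according to whether the swap leaves the Cartesian tree structure intact. In the benign case, for each affected minimum the right-hand side $\sigma(M(k)) - \sigma(k)$ can rise by at most one, and the rise is always matched by the corresponding increment on the left: if $x = l[M(k)]$ then $\textit{Inc}(l[M(k)])$ just rose by one, and if $y = l[k]$ then $\textit{Dec}(l[k])$ just rose by one; the symmetric sub-cases $y = l[M(k)]$ or $x = l[k]$ only decrease the right-hand side and strengthen the invariant, so no exchange is needed.

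The heart of the argument is the structural case, where the swap promotes or demotes a minimum, inserts or removes an element from some left subtree, or replaces the argmax realizing $M(k)$. For each local reconfiguration I would specify an exchange of counter values between exactly two elements and verify that the prior invariant on the outgoing configuration implies the required invariant on the new one. For example, if the swap pushes $x$ past the old $M(k) = m$ so that $x$ becomes the new $M(k)$, I would swap $\textit{Inc}(l[m])$ and $\textit{Inc}(x)$ after doing the increments; the new bound $\textit{Inc}(x) + \textit{Dec}(l[k]) \geq \sigma(x) - \sigma(k)$ then follows from the old bound $\textit{Inc}(l[m]) + \textit{Dec}(l[k]) \geq \sigma(l[m]) - \sigma(k)$ together with the pre-swap inequality $\sigma(l[m]) \geq \sigma(x)$. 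The main obstacle will be enumerating every way a single adjacent swap in $l'$ can modify the Cartesian tree: since only two keys change by $\pm 1$ I expect a finite and small catalogue of local reconfigurations around $x$ and $y$, but verifying that one prescribed exchange simultaneously restores the invariant for every minimum affected by the reconfiguration requires careful bookkeeping, and is where I expect most of the work to lie.
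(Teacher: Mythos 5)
Your proposal correctly sets up an induction on the number of random swaps, correctly separates a benign case (no structural change to the Cartesian tree) from a structural case, and your sample exchange rule for the sub-case where a new element becomes the argmax realizing $M(k)$ is sound for that local event. However, there is a fundamental gap. You work directly with Invariant~1, which constrains only elements currently on the minima path, and therefore says nothing about a vertex that is \emph{about to join} that path. When a single swap rotates some vertex $l[k]$ with a nonempty left subtree onto the rightmost path, you suddenly need $\textit{Inc}\bigl(l[M(k)]\bigr) + \textit{Dec}\bigl(l[k]\bigr) \geq \sigma\bigl(M(k)\bigr) - \sigma(k)$, but that right-hand side may already be large, and your inductive hypothesis gives you no control over those counters because $l[k]$ was not previously a minimum. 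A single increment plus one pairwise counter exchange cannot manufacture a large deficit out of nothing, so this case cannot be closed from Invariant~1 alone.

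The paper's proof resolves exactly this difficulty by strengthening the invariant rather than attacking Invariant~1 directly: it pairs \emph{every} degree-three vertex of the Cartesian tree with a descendant leaf (via a DFS rule that assigns the smaller of the two subtree maxima to each internal node, with the sentinel $-1$ paired to $n$), and it maintains the analogous counter bound for every such pair (its Invariant~2). Because this stronger pairing invariant is maintained globally, any vertex that rotates onto the minima path already carries the bound it needs, and each of the finitely many local tree reconfigurations requires at most one counter exchange between the two swapped elements. Your sketch flags minima-path changes as part of the ``structural case'' but defers them to ``careful bookkeeping''; that bookkeeping is precisely the step your strategy cannot complete without introducing an auxiliary pairing invariant (or an equivalent strengthening) that lives on all internal vertices, not just the current minima.
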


\subsection{Bounding the Counters with Balls and Bins}

We model the $\textit{Inc}$ and $\textit{Dec}$ counters each with a balls and bins process and analyze the sum of squares of balls in each bin. Each element in $l$ is associated with one of $n$ bins. When an element's $\textit{Inc}$ counter is increased, throw a ball into the corresponding bin. If a pair of $\textit{Inc}$ counters are exchanged, exchange the set of balls in the two corresponding bins. The $\textit{Dec}$ counters can be modeled similarly. 

This process is almost identical to throwing balls into $n$ bins uniformly at random. 
Note that the exchanging of balls in pairs of bins takes place after
a ball has been placed in a chosen bin, effectively permuting two bin
labels in between steps. If every bin was equally likely to be hit at
each time step, then permuting the bin labels in this way would not
change the final sum of squares and the exchanging of counters could
be ignored entirely. Unfortunately the bin for the element at
$l[n-1]$ in the case of $\textit{Inc}$ counters or $l[0]$ in the case of $\textit{Dec}$
counters cannot be hit, i.e., there is a forbidden bin controlled by the counter swapping strategy.
However, even when in each round the forbidden bin is adversarially chosen, the sum of squares of the number of balls in each bin will be stochastically dominated by a strategy of always forbidding the bin with the lowest number of balls. 
Therefore, the sum of squares of $m$ balls being thrown uniformly at random into $n-1$ bins stochastically dominates the sum of squares of the $\textit{Inc}$ (or $\textit{Dec}$) counters after $m$ steps.

\begin{theorem}\label{thm:balls-bins-squared}
If $cn$ balls are each thrown uniformly at random into $n$ bins with $c > e$, then the sum over the bins of the square of the number of balls in each bin is at most $3c^2n$ with exponentially high probability.
\end{theorem}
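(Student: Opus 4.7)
The plan is to combine the Poisson approximation of Lemma~\ref{thm:pois} with a truncation step and then apply Hoeffding's inequality (Lemma~\ref{thm:hoeffding}). By Lemma~\ref{thm:pois}, any bound of the form $e^{-\mathop{poly}(n)}$ on the analogous event for independent Poisson random variables $Y_1,\dots,Y_n$ with $\lambda=c$ transfers to the multinomial bin-count variables with only a polynomial loss of $e\sqrt{cn}$, so I would pass immediately to the $Y_i$. A direct computation gives $E[Y_i^2]=\mathrm{Var}(Y_i)+E[Y_i]^2 = c+c^2$, which is strictly less than $2c^2$ since $c>e>1$; hence $E\bigl[\sum_i Y_i^2\bigr]<2c^2n$, and it remains to show that $\sum_i Y_i^2$ does not exceed its mean by more than $c^2 n$ except with exponentially small probability.

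The obstacle is that Lemma~\ref{thm:hoeffding} requires bounded summands while the $Y_i^2$ are unbounded, so I would truncate: fix a threshold $K=n^{1/10}$ and set $Z_i=\min(Y_i,K)$. The $Z_i^2$ are independent and lie in $[0,K^2]$, so Lemma~\ref{thm:hoeffding} applied with deviation $t=c^2$ yields
\[
\Pr\!\left[\sum_i Z_i^2 - E\!\left[\sum_i Z_i^2\right] \ge c^2 n\right] \;\le\; \exp\!\left(-\frac{2n c^4}{K^4}\right) \;=\; \exp\!\left(-2c^4\,n^{3/5}\right),
\]
which is exponentially small. Meanwhile, the standard Chernoff-type Poisson tail bound gives $\Pr[Y_i\ge K]\le \exp(-\Omega(K\log K))=\exp(-\Omega(n^{1/10}\log n))$, so by a union bound $Y_i=Z_i$ for every $i$ with exponentially high probability. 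On that event $\sum_i Y_i^2=\sum_i Z_i^2 \le E\bigl[\sum_i Z_i^2\bigr]+c^2 n\le 3c^2 n$, and pushing the resulting tail bound back through Lemma~\ref{thm:pois} completes the argument.

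The hard part is tuning $K$: it must be large enough that every $Y_i\le K$ with probability $1-e^{-\mathop{poly}(n)}$, which rules out $K=O(\log n)$ since that only gives a quasi-polynomial tail bound; yet small enough that the Hoeffding denominator $nK^4$ stays well below $n^2$, ruling out $K=n^{\alpha}$ with $\alpha\ge 1/4$. Any $\alpha\in(0,1/4)$ makes both estimates of the form $\exp(-n^{\Omega(1)})$, matching the paper's definition of exponentially high probability; I pick $\alpha=1/10$ purely for concreteness. The hypothesis $c>e$ plays only a mild role, keeping the mean $n(c+c^2)$ comfortably below $3c^2n$ so that a single application of Hoeffding with a constant deviation suffices.
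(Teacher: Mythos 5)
Your proposal is correct and takes essentially the same route as the paper: pass to independent Poissons via Lemma~\ref{thm:pois}, control the unboundedness of $Y_i^2$ at a polynomial threshold, apply Hoeffding with deviation $t=c^2$, and push the bound back through the Poisson approximation. The only difference is cosmetic---you truncate $Y_i$ to $Z_i=\min(Y_i,K)$ and show the truncation is vacuous with exponentially high probability, while the paper instead conditions on the event that every $Y_i$ is below its threshold $ecn^{1/6}$ and applies Hoeffding conditionally; these are interchangeable, and your observation that any exponent $\alpha\in(0,1/4)$ works matches the paper's implicit use of $\alpha=1/6$.
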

\begin{proof}
Let $X_1,\dots,X_n$ be random variables where $X_k$ is the number of balls in bin $k$ and let $Y_1,\dots,Y_n$ be independent Poisson random variables with $\lambda = c$. 

By the Poisson approximation, Lemma~\ref{thm:pois}, 
\[ \Pr\left[\sum_k X_k^2 \geq 3c^2n\right] \leq e\sqrt{cn}
\Pr\left[\sum_k Y_k^2 \geq 3c^2n\right].\]

Let $Z_k$ be the event that $Y_k \geq ecn^{1/6}$ and $Z$ be the event that at least one $Z_k$ occurs.

\begin{align*}
\Pr[Z] & \leq n\Pr[Z_1] \quad \text{by a union bound.}\\
\Pr[Z_1] & = e^{-c} \sum_{k=ecn^{1/6}}^\infty \frac{c^k}{k!} \leq e^{-c} \sum_{k=ecn^{1/6}}^\infty \frac{c^k}{e\left(\frac{k}{e}\right)^{k}}\\
 & = e^{-c-1} \sum_{k=ecn^{1/6}}^\infty \left(\frac{ec}{k}\right)^k \leq e^{-c-1} \sum_{k=ecn^{1/6}}^\infty \left(\frac{1}{n^{1/6}}\right)^k\\
 & = e^{-c-1} (n^{1/6})^{-ecn^{1/6}} \sum_{k=0}^\infty
 \frac{1}{n^{1/6}}^k \leq e^{-c}n^{-\frac{ec}{6}n^{1/6}}.\\
 \Rightarrow \Pr[Z] & \leq \frac{n}{e^cn^{\frac{ec}{6}n^{1/6}}} \leq
 e^{-\Omega(n^{1/6})}.
\end{align*}

Letting $Y = \sum_k Y_k^2$:
\begin{align*}
E[Y | \neg Z] \leq E[Y]  = nE[Y_1^2] = n\left(c + c^2\right) \leq 2c^2n.
\end{align*}

Given $\neg Z$, $(Y_k)^2 \in [0, ecn^{1/3}]$. So we can apply
Hoeffding's inequality, Lemma~\ref{thm:hoeffding}, to get:
\[\Pr\left[Y - E\left[Y|\neg Z\right] \geq tn | \neg Z\right] \leq
e^{-2t^2n^2/\left( n \left(ecn^{1/3}\right)^2\right)}. \]

Setting $t = c^2$, we have:
\begin{align*}
\Pr\left[Y - E\left[Y|\neg Z\right] \geq c^2n | \neg Z\right] &\leq e^{\left(-2c^4n^2\right)/\left(n\left(ecn^{1/3}\right)^2\right)}\\
 & \leq e^{-2n^{1/3}}.
\end{align*}

Because $E\left[Y|\neg Z\right] \leq 2c^2n$, we have $\Pr[Y \geq 3c^2n | \neg Z] \leq e^{-\Omega(n^{1/3})}$.
\begin{align*}
\Pr\left[Y \geq 3 c^2n \right] & = \Pr\left[Y \leq 3 c^2n\text{
and } Z\right] + \Pr\left[Y \leq c^2n \text{ and } \neg Z\right]\\
 & \leq \Pr[Z] + \Pr\left[Y \leq 3c^2 n | \neg Z\right]\\
 & \leq e^{-\Omega(n^{1/6})} + \Pr[Y - E[Y|\neg Z] \geq c^2n | \neg Z]\\
 & \leq e^{-\Omega(n^{1/6})} + e^{-\Omega(n^{1/3})} \leq
 2e^{-\Omega(n^{1/6})}.
\end{align*}

Thus, we can conclude $\Pr[\sum_k X_k^2 \geq 3c^2n] \leq \frac{2e\sqrt{cn}}{e^{\Omega(n^{1/6})}} \leq e^{-\mathop{poly}(n)}$.
\end{proof}

Recall that by Lemma~\ref{lem:est-bound}, if an insertion-sort round
ends at time $t$, then $I_{t_s} \leq 2(t-t_s) + B_{t} + 1$.
Theorem~\ref{thm:balls-bins-squared} and a simple union bound tell us
that if $t \leq t_s+cn$, then $\sum_{k=0}^{n-1} \textit{Inc}\bigl(l[k]\bigr)^2 \leq
3c^2(n-1)$ and $\sum_{k=0}^{n-1} \textit{Dec}\bigl(l[k]\bigr)^2 \leq 3c^2(n-1)$ with
exponentially high probability. So by Lemma~\ref{lem:counter-sums}, $B_t \leq 12c^2n$. 

Recall that when the insertion sort round finishes, $2(t_e-t_s-1) + B_{t_e} +1 \geq I_{t_s}$. 
If fewer than $cn$ steps have been performed, the left hand side of this inequality is less than $(12c^2 + 2c)n$ with exponentially high probability.
Therefore, if we started with $(12c^2 + 2c)n$ inversions, the current
round of insertion sort must perform at least $cn$ steps with
exponentially high probability; otherwise, there are unfixed but still
``good'' inversions. This completes the proof of Lemma~\ref{lem:many-invs}.

\section{Conclusion}

We have shown that,
although it is much simpler than quicksort and only fixes at most 
one inversion in each step, repeated insertion sort leads to the 
asymptotically optimal number of inversions in the evolving data model. 
We have also shown that by using a single round of quicksort before our repeated insertion sort,
we can get to this steady state after an initial phase of $O(n\log n)$ steps, which is also asymptotically optimal.

For future work, it would be interesting to explore whether our results can
be composed with other problems involving algorithms for evolving data, where
sorting is a subcomponent.
In addition, our analysis in this paper is specific to insertion sort, and only applies when exactly one random swap is performed after each comparison. We would like to extend this to other sorting algorithms that have been shown to perform well in practice and to the case in which the number of random swaps per comparison is a larger constant.
Finally, it would also be interesting to explore whether one can derive a much better
$\epsilon$ value than we derived in the proof of Lemma~\ref{lem:swap-prob}.

\bibliographystyle{plainurl}
\bibliography{refs}

\clearpage
\begin{appendix}
\section{Proof of Lemma 4} \label{app:swap-prob}
\swapproblemma*

\begin{proof}
We call a random adjacent swap that decreases the number 
of inversions, $I$, during the insertion-sort round a \emph{good swap}.

Break the time interval for this round of insertion sort into epochs, each of size between $n/32$ and $n/16$  (this is possible because $t^* \geq n-1$, by Lemma~\ref{lem:ins-basics}) and let $t_k$ be the start of epoch $k$. Denote the length of epoch $k$ by $t^*_k = t_{k+1} - t_k$. Given the values of $i$ and $j$ at $t_k$, only the elements in the ranges $l[j-n/16,j]$ and $l[i-n/16,i+n/16]$ will be involved in insertion sort comparisons during epoch $k$. This set of potentially compared elements has size at most $3n/16$. 

Consider the set of adjacent disjoint 4-tuples in $l'$, $l'[4a],l'[4a+1],l'[4a+2],l'[4a+3]$ for $a = 0,1,\dots,n/4$. There are $n/4$ of these tuples and so there are at least $n/4 - 3n/16 = n/16$ tuples whose elements cannot be involved in comparisons during a given epoch. Call such a tuple of elements an \emph{untouchable tuple}.

We now examine just the swaps during one specific epoch. Let $X_i$ be the number of random adjacent swaps that swap $l'[i]$ with $l'[i+1]$ for $i=0,1,\dots,n-1$. Let $Y_i$ be independent identically distributed Poisson random variables with parameter $\lambda = \frac{t^*_k}{n-1}$ for $i=0,1,\dots,n-1$. Note that $1/32 \leq \lambda \leq \frac{n}{16(n-1)} \leq 1/15$ for large enough $n$.
Let $f(z_1,z_2,\dots,z_n)$ be the function that counts how many $a = 0,1,\dots,n/4$ there are such that the tuple $l'[4a],l'[4a+1],l'[4a+2],l'[4a+3]$ is untouchable and $z_{4a} = 0$, $z_{4a+1} = 2$, and $z_{4a+2} = 0$.

By the Poisson approximation, Lemma~\ref{thm:pois}, for any $\delta > 0$, 
\[\Pr\bigl[f(X_1,X_2,\dots,X_{n-1}) \leq \delta n\bigr] \leq
e\sqrt{n/16}\, \Pr\bigl[f(Y_1,Y_2,\dots,Y_{n-1}) \leq \delta n\bigr].\]

As previously stated, there are at least $n/16$ untouchable tuples. Because the $Y_i$ are independent, for an untouchable 4-tuple $l'[4a],l'[4a+1],l'[4a+2],l'[4a+3]$,
\begin{align*}
\Pr\bigl[Y_{4a} = 0,Y_{4a+1} = 2,Y_{4a+2} = 0\bigr] &= \frac{e^{-3\lambda} \lambda^2}{0!2!0!}\\
 & \geq \frac{e^{-3/15}\left(1/32\right)^2}{2} \\
  & \geq \frac{3}{10,000}
\end{align*}
$f(Y_1,Y_2,\dots,Y_{n-1})$ is the sum of at least $n/16$ independent
indicator random variables that each have probability at least
$3/10,000$ of being $1$. Thus $E[f(Y_1,Y_2,\dots,Y_{n-1})] \geq
\frac{3n}{160,000}$. Therefore, by a Chernoff bound from \cite{alice}:

\begin{align*}
\Pr\left[f\left(Y_1,Y_2,\dots,Y_{n-1}\right) \leq \left(1-\frac{1}{2}\right) \frac{3n}{160,000}\right] &\leq e^{-\Omega(n)}\\
\Pr\left[f\left(X_1,X_2,\dots,X_{n-1}\right) \leq \left(1-\frac{1}{2}\right) \frac{3n}{160,000}\right] &\leq \frac{e\sqrt{n/16}}{e^{\Omega(n)}} \leq e^{-\mathop{poly}(n)}
\end{align*}

Therefore, within each epoch of the insertion sort round there are at
least $\frac{3}{320,000} n$ untouchable tuples where the middle pair
of indices are swapped twice and the other two pairs are not swapped,
with exponentially high probability. In each of these tuples one of the two swaps must have been a good swap.

So we can conclude that for each epoch, with exponentially high
probability, there are $\frac{3}{320,000} n$ good swaps. Because there are at least $\frac{t^*}{n/16}$ epochs, setting $\epsilon = \frac{3}{20,000}$ 
implies there are at least $\epsilon t^*$ good swaps during the
entire insertion sort round, with exponentially high probability.
\end{proof}

\section{Counter swapping} \label{app:swap-counters}
Given a list, $L$, of $m$ numbers with no two equal numbers, the
\emph{Cartesian tree}~\cite{Vuillemin:1980} of $L$ is a binary rooted tree on the numbers where the root is the minimum element $L[k]$, the left subtree of the root is the Cartesian tree of $L[0:k-1]$, and the right subtree of the root is the Cartesian tree of $L[k+1:m]$.
In our analysis, we will primarily consider the Cartesian tree of the simulated final state at time $t$ where $L[k] = \hat{\sigma}_t(k)$ in the frozen-state permutation
$\hat{\sigma}_t$.
We also choose to include two additional elements, $L[-1] = -1$ and $L[n] = n$,
for boundary cases.
Figure~\ref{fig:cart-trees} shows an example Cartesian tree we might consider.
The Cartesian trees we consider are only for the sake of analysis. They are not explicitly constructed.
 
\begin{figure}[!b]
\centering
\includegraphics[scale=1.1]{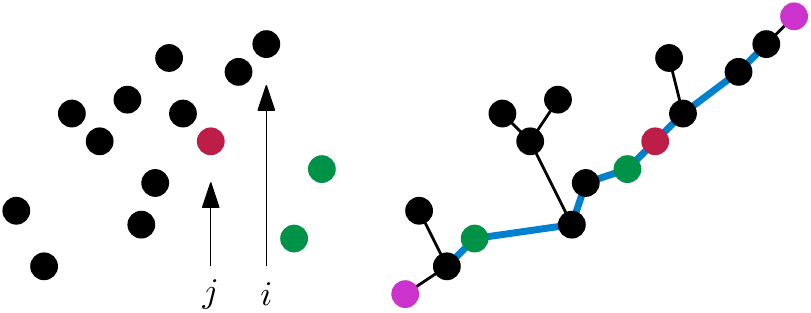}
\caption{On the left we have a representation of $\sigma$, a dot for each 
element $x$ is drawn at the coordinate $(a,b)$ where $x = l[a] = l'[b]$. On the 
right the elements have been moved to their position in $\hat{\sigma}$ and the 
corresponding Cartesian tree is superimposed.  The active element of insertion sort 
at the current moment is highlighted in red, the elements that haven't been seen by 
the algorithm are highlighted in green, the added elements are highlighted in 
pink, and the minima path is highlighted in blue.}\label{fig:cart-trees}
\end{figure}

Recall that $B_t$ is the number of bad inversions, which is the sum of the blocked and stuck inversions. For the purposes of bounding $B_t$, we conceptually associate two counters, $\textit{Inc}(x)$ and $\textit{Dec}(x)$, with each element, $x$. The counters are initialized to zero at the start of an insertion sort round. When an element $x$ is increased by a random swap in $l'$, we increment $\textit{Inc}(x)$ and when $x$ is decreased by a random swap in $l'$, we increment $\textit{Dec}(x)$. After the random swap occurs, we may choose to exchange some of the counters between pairs of elements.

\incdeccounterlemma*

\begin{figure}[hbt]
\centering
\includegraphics{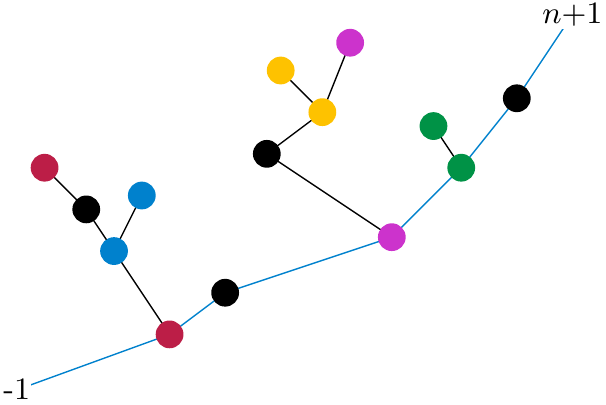}
\caption{Every degree-three vertex is paired up with a leaf in one of it's subtrees. The node $-1$ is always paired with node $n+1$.}\label{fig:pairing-up-examp}
\end{figure}

Maintaining Invariant~1 in the face of the random swaps in $l'$ can
be difficult, because new minima could be added to the path or old
minima could be removed from the path. To handle these challenges,
we pair up each element with degree three in the Cartesian tree with
a descendant leaf. First, as a special case, the $-1$ element in the
Cartesian tree is paired with the $n+1$ element. To find pairs for
the degree-three elements, we consider traversing the tree in depth
first order starting at the root. Below a degree-three element in the
Cartesian tree there are two subtrees. When a degree-three element is
encountered in the traversal, the larger of the maximum leaf element
in the left subtree and the maximum leaf element in the right subtree
will have already been paired up. So we pair the degree-three element
with the unpaired (and smaller) of the two maximum leaves
(Figure~\ref{fig:pairing-up-examp}). For a degree-three element, $l[a]$, denote the index in $l$ of its pair with $P(a)$.
We enforce the following stronger invariant:

\textbf{Invariant 2.} 
\textit{For every element $l[a]$ with degree three in the Cartesian tree,}
$\sigma\bigl(P(a)\bigr) - \sigma(a) \leq \textit{Inc}\bigl(l[P(a)]\bigr) + \textit{Dec}\bigl(l[a]\bigr)$.

Invariant 2 implies Invariant 1, because each minima along the path is either paired with the maximum leaf element in its left subtree if it has one.

We now consider how to maintain Invariant 2 after each random swap in
$l'$. Suppose $\sigma(a) = k+1$ and $\sigma(b) = k$ are the swapped
pair and for now assume neither is the active element. After the swap
$\sigma(a) = k$ and $\sigma(b) = k+1$ and the two counters
$\textit{Dec}\bigl(l[a]\bigr)$ and $\textit{Inc}\bigl(l[b]\bigr)$ are incremented. However, the slight upward and downward movement of elements may have changed how elements are paired up either by a structural change in the Cartesian tree or exchanging the relative value of two leaf elements. There are several cases to analyze based on how the random swap affected the modified Cartesian tree.

First we observe that if the random swap did not affect the pairing of elements, then the incrementing of counters maintains the invariant. For example, if $a$ has a pair $P(a)$, then $\sigma\bigl(P(a)\bigr) - \sigma(a)$ is increased by one and if there is an element $l[c]$ with $P(c) = b$, then $\sigma(b) - \sigma(c)$ increased by one. Each of these increases are offset by the incrementing of $\textit{Dec}\bigl(l[a]\bigr)$ and $\textit{Inc}\bigl(l[b]\bigr)$ respectively.

If the random adjacent swap did affect the pairing of elements, then either $l[a]$ and $l[b]$ are adjacent in the tree or $l[a]$ and $l[b]$ are leaf elements with least common ancestor $l[c]$. In this second case, there is an ancestor of $l[c]$ paired with $l[a]$ before the swap which is paired with $l[b]$ after and $l[c]$ is paired with $l[b]$ before the swap and is paired with $l[a]$ after. For both pairing changes, the distance between the paired elements is unchanged, but the $\textit{Inc}$ counter of the leaf element in the pairs may be incorrect. So we exchange $\textit{Inc}\bigl(l[a]\bigr)$ and $\textit{Inc}\bigl(l[b]\bigr)$.

In the case where $l[b]$ and $l[a]$ are adjacent in the tree, before the swap $l[b]$ is the parent of $l[a]$ and afterwards $l[a]$ is the parent of $l[b]$. When this happens, if either $l[a]$ or $l[b]$ are unsorted elements, then both elements must lie on the minima path and the swap simply exchanges their order on the minima path. So while there is a change in the tree structure, there is no change in the pairing of elements.

We can now assume both elements are semi-sorted which leads to some case analysis based on the degrees of $l[a]$ and $l[b]$ which determines how they are paired with other elements. In these cases, the random swap acts almost like a tree rotation.

\begin{itemize}
\item If $l[a]$ and $l[b]$ both have degree three, then together there are three subtrees below $l[a]$ and $l[b]$. For the largest elements in these three subtrees, one is paired with $l[a]$, one is paired with $l[b]$, and the third is paired with an ancestor of $l[a]$ and $l[b]$. After the random swap, the ancestor will have the same paired element, but $l[a]$ and $l[b]$ may have had their pairs exchanged.
In this case, to maintain our invariant if the pairings changed, we exchange $\textit{Dec}\bigl(l[a]\bigr)$ and $\textit{Dec}\bigl(l[b]\bigr)$. 

This case is shown in Figure~\ref{fig:exchange-examp}.

\item If either $l[a]$ or $l[b]$ has degree three and the other has degree two, then there are two subtrees below $l[a]$ and $l[b]$ in the subtree. Out of the two maximums in the subtrees, one is associated with whichever of $l[a]$ and $l[b]$ has two children and one is associated with an ancestor of $l[a]$ and $l[b]$. Notice that when a swap happens, the degree of $l[a]$ and $l[b]$ will not change if there is a subtree ``between them'' i.e. there are descendants of $l[a]$ and $l[b]$ with index between $a$ and $b$ (or equivalently $|a-b|\neq 1$).

When there is no subtree between $l[a]$ and $l[b]$, then the swap exchanges the degrees of the two elements.
In this case, to maintain the invariant we also exchange $\textit{Dec}\bigl(l[a]\bigr)$ and $\textit{Dec}\bigl(l[b]\bigr)$. 

\item If $l[a]$ has degree one and $l[b]$ has degree three, then
there is only one subtree below $l[a]$ and $l[b]$. Because $\sigma(a)
= \sigma(b) + 1$, that subtree's maximum must be larger than
$\sigma(a)$. So $P(b) = a$. After the swap, this pairing relationship
is destroyed, because both elements will have degree two.
In this case, no additional work is needed to maintain the invariant.

\item If $l[a]$ and $l[b]$ both have degree two, then there is only one subtree below $l[a]$ and $l[b]$. Again we condition on whether or not there is a subtree between $l[a]$ and $l[b]$.

If there is a subtree between them, then the swap simply reorders $l[a]$ and $l[b]$ on the path leading to that subtree causing no change in pairings and maintaining the invariant.

When there is no such subtree, after the swap, one of $l[b]$ will now be a leaf, $l[a]$ will have degree three, and $P(a) = b$.
In this case, a new pairing relationship was created between $l[a]$ and $l[b]$. The swap incremented $\textit{Dec}\bigl(l[a]\bigr)$ and $\textit{Inc}\bigl(l[b]\bigr)$ so $\sigma\bigl(l[a]\bigr) - \sigma\bigl(l[b]\bigr) = 1 <  2 \leq \textit{Inc}\bigl(l[b]\bigr) + \textit{Dec}\bigl(l[a]\bigr)$ and the invariant holds.

\item If $l[a]$ has degree one and $l[b]$ has degree two, then there are no subtrees below $l[a]$ and $l[b]$. After the swap, they will switch which element is the leaf. An ancestor was paired with $l[a]$ and is now paired with $l[b]$. In this case, to maintain the invariant we exchange $\textit{Inc}\bigl(l[a]\bigr)$ and $\textit{Inc}\bigl(l[b]\bigr)$.
\end{itemize}

\begin{figure}[hbt]
\centering
\includegraphics{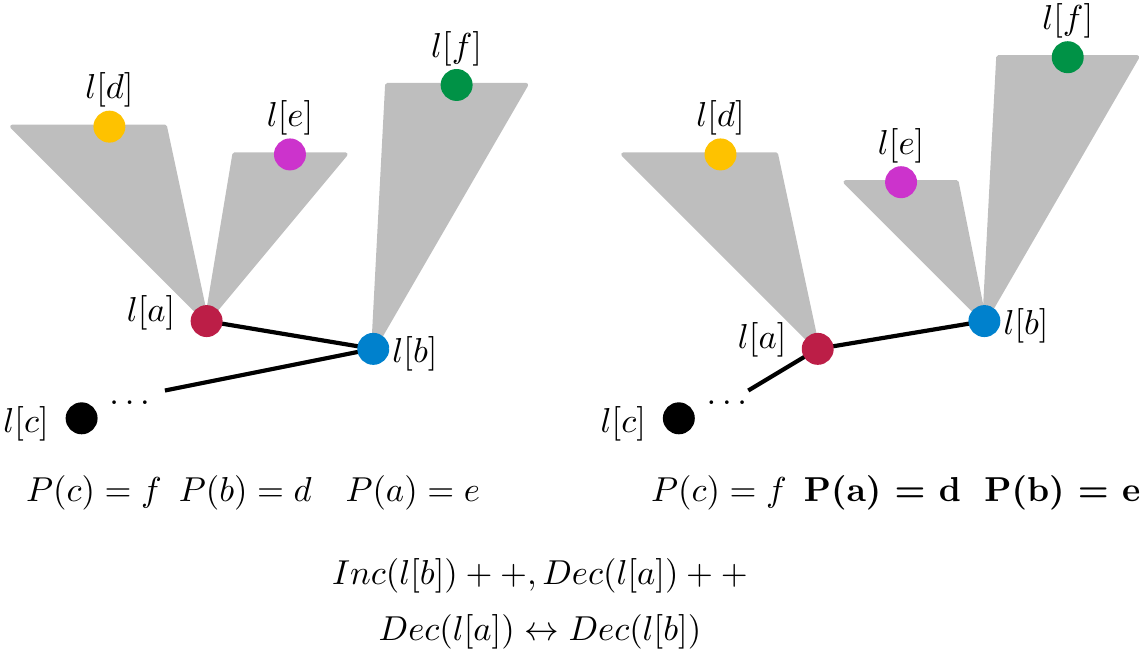}
\caption{When the red and blue element are randomly swapped in $l'$, they switch paired elements and exchanging their $\textit{Dec}$ counters maintains the invariant.}\label{fig:exchange-examp}
\end{figure}

When the random adjacent swap in $l'$ involves the active element,
the affect on the Cartesian tree can be somewhat more complicated.
Issues might arise because $l[j]$ is not yet slotted into its simulated final
horizontal position in the Cartesian tree. We need to make sure the
horizontal movements of the active element do not invalidate the
invariant. Suppose there is a maximal index $k < j$ such that
$\sigma(k) < \sigma(j)$, i.e., index $k+1$ is where the insertion of $l[j]$ will stop. When there is no such $k$, $l[j]$ will be inserted at the front of the list and so we set $k$ to be $-1$. If $l[j]$ swaps with an element outside the range $[k,j - 1]$, then no horizontal movement of $l[j]$ will occur and we can handle the case as though $l[j]$ is semi-sorted.

So suppose $l[j]$ is swapped with $l[a]$ with $a \in [k,j-1]$ and $\sigma(a) = \sigma(j) + 1$ before the swap.  After the swap, $l[j]$ will be moved immediately to the right of $l[a]$ in the Cartesian tree and is the right child of $l[a]$. Because $\sigma(j)$ is smaller than $\sigma(x)$ for $x\in [k,j-1]$, $l[a]$ must be the right child of $l[j]$ before the swap. So $l[j]$ has degree two and is unpaired before the swap.

\begin{itemize}
\item If $l[a]$ had a right child before the swap, then $l[j]$ now subdivides the edge from $l[a]$ to its old right child and has degree two. So the invariant is maintained.
\item If $l[a]$ had only a left child before the swap, then $l[a]$ is
now paired with $l[j]$, which is a leaf after the swap. The invariant
requires $\sigma(j) - \sigma(a) = 1 \leq \textit{Inc}\bigl(l[j]\bigr) + \textit{Dec}\bigl(l[a]\bigr)$. This
inequality is satisfied, because the swap incremented $\textit{Inc}\bigl(l[j]\bigr)$.
\item If $l[a]$ was a leaf paired with $l[c]$ before the swap, then $l[j]$ is now paired with $l[c]$. Exchanging the $\textit{Inc}$ counters for $l[j]$ and $l[a]$ guarantees the invariant is maintained.
\end{itemize}

Now we consider the final case where $l[j]$ is swapped with $l[a]$ with $a \in [k,j-1]$ and $\sigma(a) +1 = \sigma(j)$ before the swap. Because $\sigma(a) < \sigma(j)$, $a=k$. Additionally we observe that $l[j]$ is the right child of $l[a]$ in the Cartesian tree before the swap. After the swap, $l[a]$ is the right child of $l[j]$ and $l[j]$ has degree two. So $l[j]$ is unpaired after the swap.

\begin{itemize}
\item If $l[j]$ had a right child before the swap, then $l[j]$ now subdivides the edge from $l[a]$ to its old parent and has degree two. So the invariant is maintained.
\item If $l[j]$ is a leaf and $l[a]$ has a left child, then $l[a]$ was paired with $l[j]$ before the swap. After the swap, $l[a]$ and $l[j]$ both have degree two with $l[j]$ subdividing the old edge between $l[a]$ and its parent.
\item If $l[j]$ is a leaf and $l[a]$ does not have a left child, then there is some ancestor paired with $l[j]$. The pairing will switch to $l[a]$ after the swap. Exchanging the $\textit{Inc}$ counters for $l[j]$ and $l[a]$ maintains the invariant.
\end{itemize}

\end{appendix}

\end{document}